\theoremstyle{definition}
\newtheorem{thm}{Theorem}
\newtheorem{lem}{Lemma}
\begin{document}

\title{Optimal Offline Packet Scheduling in Energy Harvesting 2-user Multiple Access Channel with Common Data}

\author{Burhan~Gulbahar,~\IEEEmembership{Senior Member,~IEEE}
\thanks{B. Gulbahar is a faculty member at the Department of Electrical and Electronics Engineering,  \"{O}zye\u{g}in University, \.{I}stanbul, Turkey 34794 and the coordinator of Vestel Electronics Academy Graduate Education Program,  Manisa, Turkey 45030. e-mail: burhan.gulbahar@ozyegin.edu.tr.}
}
 
\markboth{Submitted for publication}%
{Submitted paper}

 
 
\maketitle

\vspace*{-0.5cm}

\begin{abstract}
The lifetime and the sustainability of the  wireless sensor networks (WSNs) can be increased with energy harvesting transmitters utilizing optimum packet scheduling. On the other hand, WSNs are observed to collect spatially or temporally correlated data which should be taken into account for the optimum packet scheduling in an energy harvesting system. However, the solutions available for 2-user multiple-access channel  (MAC) systems with energy harvesting transmitters do not consider the common data or the correlation among the data. In this paper, optimal packet scheduling for energy harvesting 2-user Gaussian MAC with common data is achieved by assuming deterministic knowledge of the data and energy packets, i.e., offline solution. The optimum departure region is found by using Karush-Kuhn-Tucker (KKT) conditions generalizing the solutions obtained for the MAC without common data. An efficient iterative backward water-filling algorithm is defined. The optimum solution is numerically compared  with the case of no scheduling,  uniform power scheduling and the previous solutions defined for the MAC without common data by showing the improvement obtained with the optimization. 

\end{abstract}
 
\begin{IEEEkeywords}
energy harvesting, MAC, common data, correlation, packet scheduling.
\end{IEEEkeywords}

%
\IEEEpeerreviewmaketitle

\section{Introduction}
 
\IEEEPARstart{E}{nergy} harvesting with optimum  packet scheduling policy is significantly important for increasing the lifetime and sustainability of wireless sensor networks (WSNs) and to achieve the demands of green communications  \cite{b4, b11, b12, b18}.  The scarcity and sporadic availability of the energy make it necessary to store it and utilize optimally. Therefore, optimum power management and data transfer schemes are significantly important for WSNs. The correlation of data observed in WSNs is one of the most important factors to be taken into account for designing optimum power scheduling algorithms for the energy harvesting transmitters in order to consume the available resources more efficiently \cite{b24}. In this article, previous optimal solutions defined for Gaussian MAC are extended to include common data observed at the energy harvesting transmitters.

Optimum online and offline packet scheduling in energy harvesting communication systems are recently investigated for single hop, multiple-access channel (MAC) and broadcast systems. Optimal packet scheduling for single-user energy harvesting communication systems are presented in \cite{b7, b8}. In \cite{b1}, a directional water-filling algorithm optimizes the throughput for a single-user fading channel with additive Gaussian noise with finite capacity rechargeable batteries under offline and online knowledge. In \cite{b14}, a two-hop relaying communication network with energy harvesting rechargeable nodes  is  formulated for the offline end-to-end throughput maximization as a convex optimization problem.

Besides that, similar analyses are achieved for the MAC schemes. In \cite{b4, b11, b10}, optimal packet scheduling problem is solved in a 2-user MAC system with energy harvesting transmitters where the energy harvesting times and harvested energy amounts are known before the transmission. The Karush-Kuhn-Tucker (KKT) solution and the generalized iterative backward water-filling  algorithm are presented. However, these studies do not consider common data and correlation among the collected sensor data which is the main contribution achieved by the article. In addition, the optimum policy is not compared with no power scheduling case and uniform power scheduling polices. 

Furthermore, in \cite{b3}, the proposed optimum scheduling policies are extended by including one-way energy transfer capability between two transmitter nodes. Moreover, in \cite{b13}, the capacity region of Gaussian MAC with  amplitude constraints and batteryless energy harvesting transmitters are analysed. On the other hand, in \cite{b12}, optimal continuous-time online power policies for energy harvesting MACs are presented. In \cite{b16}, energy harvesting transmitter  and receiver pair is considered in a utility maximization framework achieving power policy using a water-filling approach. In \cite{b18}, optimal transmit power policy for energy harvesting transmitters  in a Gaussian MAC is presented by also considering storage losses. However, these studies do not consider the common data in an optimum packet scheduling framework. 

On the other hand, data correlation in WSNs are significantly important  to save the power-bandwidth resources \cite{b24}. In \cite{b2}, the explicit characterization of the capacity region in a Gaussian MAC channel with common data and fading is considered.  The optimum power allocation achieving the arbitrary rate tuples on the boundary of the capacity region are presented and  numerically computed. However, the study does not consider optimum power scheduling and operates only in a single time interval. The current study extends the optimum solution to include multiple time intervals in an optimum power scheduling policy as the main contribution. 

Moreover, in \cite{b20}, the capacity region of the discrete p-transmitter/q-receiver MAC defined as General MAC (GMAC) with a common message is derived as a generalization of \cite{b2}. In \cite{b21}, information-theoretic results and power allocation policies  in combination with joint source-channel codes on the transmission of memoryless dependent sources through a memoryless fading MAC are analysed. In \cite{b22},  2-user  MAC with common message (MACCM)   and MAC with conferencing encoders (MACCE)  with channel state information (CSI) are analyzed. The capacity results for the Gaussian MAC with cooperative encoders and with additive interference known  non-causally to both encoders are presented. However, none of these studies combine energy harvesting and optimum packet scheduling in a Gaussian MAC with common data.
 
To the best of our knowledge, in this work, optimum offline packet scheduling solution, for the first time, is given for Gaussian MAC with common data and energy harvesting transmitters. The KKT solution is given, for the first time,  for optimum packet scheduling problem for Gaussian MAC with common data and energy harvesting transmitters. An efficient iterative water-filling algorithm is introduced for the optimum solution although the optimum solution and the determination of the water levels are more complicated compared with the MAC without common data. The departure region or the capacity boundary surface for the data rates of the individual and the common data messages is numerically simulated by using the defined optimum water-filling algorithm. The optimum packet scheduling solution is compared with the solutions defined for Gaussian MAC without common data, the case for no power scheduling and uniform power scheduling framework by showing the advantages of the proposed solution.  

The remainder of the paper is organized as the following. In Section \ref{sysmodel}, power scheduling policy and the system model for the Gaussian MAC with common data are defined. Then, in Section \ref{ops}, data throughput maximization problem is defined. In Section \ref{optw}, KKT solution for the defined problem is proposed and efficient iterative water-filling algorithm is presented. In Section \ref{simulr1}, a simulation study is performed illustrating optimum scheduling policy, the departure region boundary surfaces and the comparison of the proposed solution with no scheduling and uniform power scheduling cases, and the comparison  with the previous solutions defined for MAC without common data. Then, in Section \ref{fw}, future work and open issues are discussed. Finally, in Section \ref{conc}, the conclusions are given.

\section{System Model}
\label{sysmodel}

\begin{figure}[t!]
\begin{center}
\includegraphics[width=3in]{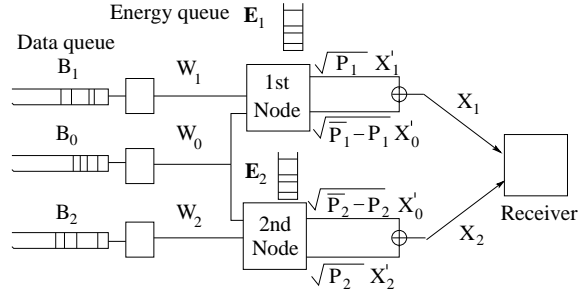} \\ 
\caption{System model for Gaussian MAC with common data and energy harvesting transmitters.}
\label{fig1}
\end{center}
\end{figure}

In this paper, energy harvesting Gaussian MAC with two transmitters and one receiver is considered as shown in Fig. \ref{fig1} \cite{b4, b11} while incorporating the common data model given in \cite{b2}. Each user has their individual data packets and also a common message known by both transmitters. It is assumed that the amounts of harvested energy and the harvesting times are known before the data transmission. Similar to  \cite{b4, b11}, energy harvesting times are put in ascending order and the length of the time interval between two energy harvesting instants $t_n $ and $t_{n+1}$ is denoted by $L(n) $  while energy harvesting starts at $t_1$ and ends at $t_N$. It is assumed that the final deadline time instant to transmit data bits is $T_f$ with the $N$th time interval length being equal to $L(N)= T_f - t_{N}$. For example, 1st user harvests $E_1(n) $ at the time instant $t_n$ and 2nd user harvests $E_2(n+1)$ at the time instant $t_{n+1}$ possibly both users harvesting energy at any single time instant. An illustrative energy harvesting scenario is shown  in Fig. \ref{fig0}.

\begin{figure}[t!]
\begin{center}
\includegraphics[width=3in]{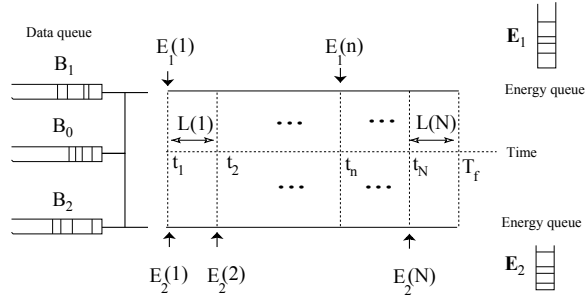} \\ 
\caption{Energy harvesting system model with common data where the data packets are available at the beginning.}
\label{fig0}
\end{center}
\end{figure}

In this article, stochastic nature of energy harvesting and the time-varying channel states are not formulated due to complexity of the issue. The harvested energy is stored in sensor nodes and the problem is simplified by assuming that the data packets are available before the transmission and information about the energy harvesting times is available. Therefore, in this article we consider an offline solution with a deterministic system setting instead of an online solution with stochastic energy, data and channel states leaving the consideration as a future work. The target is to find the maximum data throughput regions, i.e., maximum departure region \cite{b11}, for three independent messages, i.e., $W_0$, $W_1$ and $W_2$, where $W_0$ is known by both the users, for any given deadline time $T_f$ and propose a water-filling algorithm finding the optimal solution.   

The system model for the Gaussian MAC is shown in Fig. \ref{fig1}. The data packets $(B_0, B_1, B_2)$ are available at the beginning before the transmission. The packets are modulated into message sequences $W_{0,1,2}$ and 1st node  knows both $W_0$ and $W_1$ while 2nd node  knows both $W_0$ and $W_2$. The transmitted symbols, i.e., $X_{k}$, is a function of $W_{0,k}$ for $k \in [1,2]$. Each node $k$ for $k \in [1,2]$ has the available power $\overline{P_k} $ in units of (W) for transmission. Each symbol consists of the addition of the symbols for the independent message, i.e., $X_{1,2}^{'}$, and the common message, i.e., $X_0^{'}$. Each node $k$ assigns the power levels $P_k$ to its own symbol for the independent message and $\overline{P_k} - P_i$ to the symbol for the common message. In the receiver, a beam-forming gain occurs for the common message such that $P_0 = \left( \sqrt{\overline{P_1} - P_1} + \sqrt{\overline{P_2} - P_1} \right)^2$ is larger than $\overline{P_1} - P_1 + \overline{P_2} - P_2$. The inputs and the output at some specific time are related as follows,
\begin{eqnarray}
        \label{eq1} 
X_k  &=& \sqrt{ P_k} X_i^{'}  + \sqrt{\overline{P_k} - P_k} X_0^{'}, \, k \in \left[ 1, 2\right] \\
Y  &=& H_1 \,X_1 +  H_2 X_2 + Z  \nonumber  \\     
  \label{eq3} 
  & =&  h \, \left(  \sqrt{ P_1} X_1^{'} +   \sqrt{ P_2} X_2^{'} +  \sqrt{ P_0} X_0^{'} \right) + Z  
\end{eqnarray}
where $H_1 = H_2 = h$ and $Z$ is a zero mean Gaussian noise sample.   The  aim is to formulate the effect of common data on optimum power scheduling in energy harvesting MAC in a deterministic setting and the fading channel coefficient is assumed constant during transmission. 

The capacity of the channel in a time interval $L$ with the total power $P$ is denoted by $C(P)$ and given by the following,
\begin{equation}
C(P) = W_{Tot} \, \mbox{log} \left( 1 + \frac{P \, h}{  W_{Tot} N_0 } \right)  
\end{equation}
where \textit{log} denotes base $2$ logarithm, $W_{Tot}$ (Hz) refers the total bandwidth, $N_0$ (W/Hz) is the noise spectral density and $h$ is the fixed path loss. Throughout the article, $W_{Tot} \, N_0 \, / \, h$ is denoted by the power constant $A $ (W) such that $C(P) = W_{Tot}  \, \mbox{log} ( 1 + P \, / \, A ) $. The total transmitted bits in the time interval $L$ can be represented by $B(P, L) = C(P) \times L$. The capacity region of the Gaussian MAC with the common data is given by the following \cite{b2},
 \begin{eqnarray}         
R_1   &\leq&    C(P_1)   \nonumber \\
 R_2   &\leq& C(P_2)  \nonumber \\
  R_1 + R_2  &\leq&  C(P_1 + P_2) \nonumber \\
    \label{eq6} 
 R_1 + R_2 + R_0   &\leq&   C(P_1 + P_2 + P_0) 
 \end{eqnarray}
where $0 \leq P_1 \leq \overline{P_1}$, $ 0 \leq P_2 \leq \overline{P_2}$ and $P_0   =   \left( \sqrt{\overline{P_1} - P_1} +  \sqrt{\overline{P_2} - P_2}  \right)^2$,   $\overline{P_1} = P_1 + \rho^2 P_0$, $\overline{P_2} = P_2 + (1 - \rho)^2 P_0$, $\rho$ is the variable adjusting the contribution to $P_0$ by each node and $0 \leq \rho \leq 1$. A rate triplet $(R_0, R_1, R_2)$ is achievable if a sequence of $((2^{n R_0}, 2^{n R_1}, 2^{n R_2}), n)$ codes exist where the average probability of error for decoding messages correctly approaches zero as $n$ goes to infinity \cite{b2}. The capacity region $R(P_1, P_2)$ is defined by the closure of the set of achievable $(R_0, R_1, R_2)$ rate triplets. The form of $R(P_1, P_2)$ is illustrated in Fig. \ref{fig2} for some specific $P_1  \leq \overline{P_1}$ and $P_2 \leq \overline{P_2}$. 
\begin{figure}[t!]
\begin{center}
\includegraphics[width=3in]{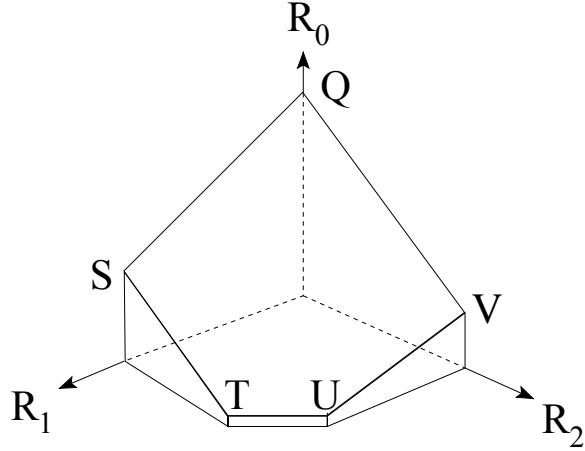} \\ 
\caption{Illustrative capacity region of Gaussian MAC with common data.}
\label{fig2}
\end{center}
\end{figure}
The data rates of main boundary points $(Q, S, T, V, U)$ on the three dimensional (3D) capacity curve are listed as follows \cite{b2}, 
\begin{eqnarray}    
S &:& (C(P_1), 0, C(P^s_{0,2}) - C(P_1)) \nonumber \\
T &:& (C(P_1), C(P^s_{1,2}) - C(P_1), C(P^s_{0,2}) - C(P^s_{1,2}))  \nonumber \\
U &:& (C(P^s_{1,2}) - C(P_2), C(P_2), C(P^s_{0,2}) - C(P^s_{1,2}))\nonumber \\
V &:& (0,C(P_2), C(P^s_{0,2}) - C(P_2)) \nonumber \\  
\label{eq12}
Q &:& (0,0, C(P^s_{0,2})  
\end{eqnarray}
where $P^s_{0,2}$ = $P_0 + P_1 + P_2$, $P^s_{1,2}$ = $P_1 + P_2$.
It is proved in \cite{b2} that all the points on the capacity region of Gaussian MAC are achieved by some point on the line segment $T-U$ of $R(P_1, P_2)$ for some $0 \leq P_1 \leq \overline{P_1}$, $ 0 \leq P_2 \leq \overline{P_2}$. Furthermore, the union $\bigcup_{(P_1, P_2, P_0, \rho) \in F} R_f(P_1, P_2, P_0, \rho )$  is denoted by the capacity region $R(\overline{P_1}, \overline{P_2})$ where the set of achievable $(R_0, R_1, R_2)$ in (\ref{eq6}) is given by $R_f(P_1, P_2, P_0, \rho ) \equiv R(P_1, P_2)$ and $F$ is the following where $k \equiv \{0,\, 1, \,2 \}$,
\begin{equation}
\label{eq14a} 
F  =  \Big\lbrace (P_{k}, \rho ):  P_{k} \geq 0, 0 \leq \rho \leq 1, P_1 + \rho^2 P_0 \leq \overline{P_1},  P_2 + \big(1 - \rho\big)^2 P_0 \leq \overline{P_2} \Big\rbrace 
\end{equation}
It is stressed out that the boundary surface of $R(\overline{P_1}, \overline{P_2})$ can be found with the following optimization problem by varying rewarding values $\mu$,
\begin{eqnarray}
   \label{eq14} 
 \max_{\boldsymbol{R}, P_1, P_2, P_0, \rho} \mu_1 R_1 + \mu_2 R_2 + \mu_0 R_0 \mbox{  s.t.  } \boldsymbol{R} \in R_f(P_1, P_2, P_0, \rho )
\end{eqnarray}
It is observed in \cite{b2} that various regions of $\mu$ values achieve  $R(\overline{P_1}, \overline{P_2})$ at the defined boundary points as in the following,
\begin{eqnarray}
\mu_1 \geq \mu_0  \geq \mu_2 &\rightarrow &  \, \, S \nonumber \\
\mu_1 \geq \mu_2  \geq \mu_0  &\rightarrow&    \, \, T   \nonumber \\
\mu_2 \geq \mu_1  \geq \mu_0  &\rightarrow&   \, \, U  \nonumber \\
\mu_2 \geq \mu_0  \geq \mu_1 &\rightarrow&   \, \, V \nonumber \\
\label{eq17} 
\mu_0 \geq \mbox{max} (\mu_1, \mu_2) &\rightarrow &  \, \, Q  
\end{eqnarray}
Therefore, an optimization framework can cover the whole capacity region by optimizing the solution at the defined boundary points. These points are extended to include multiple time intervals in the following sections.

Now, after defining the system model and capacity region for Gaussian MAC with common data, the capacity maximization  convex problem and its  solution are analysed.
\section{Data Throughput Maximization Problem}
\label{ops}

In this section, firstly, the data throughput or the maximum departure region under a deadline time constraint $T_f$ is defined. Then, the convexity of the region is proven and the capacity maximization problem is defined by using Lagrange multipliers for the defined convex optimization formulation. The problem formulation is achieved by using KKT conditions \cite{b26}. 

The departure region for the overall harvesting duration is characterized with  \textit{Lemma \ref{lemma1}} which can be proved by using the similar approaches in \cite{b2}, \cite{b11} and\cite{b25} regarding the ergodic capacity region for the overall transmitted bits within a finite amount of time.  
\begin{lem} 
\label{lemma1}
The departure region in $N$ time intervals for a time-nonvarying fading Gaussian MAC with common data and harvested energies denoted by the vectors $(\boldsymbol{\underline{E_1}}, \boldsymbol{\underline{E_2}})$ is given by
\begin{eqnarray}
   \label{lemma1_1} 
 B_d(\boldsymbol{\underline{E_1}}, \boldsymbol{\underline{E_2}}, N) = \bigcup_{\boldsymbol{P_0}, \boldsymbol{P_1}, \boldsymbol{P_2}, \boldsymbol{\rho} \, \in F_N} B(\boldsymbol{P_0}, \boldsymbol{P_1}, \boldsymbol{P_2}, \boldsymbol{\rho})  
\end{eqnarray}   
where $B(\boldsymbol{P_0}, \boldsymbol{P_1}, \boldsymbol{P_2}, \boldsymbol{\rho})$ is the set of departure triplets ($B_0, B_1, B_2$) s.t.
\begin{eqnarray}
B_1   & \leq &  \sum_{n = 1}^{N}  C\big(P_1(n)\big)\, L(n) \nonumber \\  
B_2    & \leq &  \sum_{n = 1}^{N} C\big(P_2(n)\big)\, L(n) \nonumber \\ 
B_1 \, + \, B_2   & \leq &  \sum_{n = 1}^{N}  C\big(P_1(n) + P_2(n)\big) \, L(n)  \nonumber \\
\label{lemma1_2} 
B_0 \, + \, B_1 \, + \,B_2     &\leq &  \sum_{n = 1}^{N}    C\big(P_0(n) + P_1(n) + P_2(n)\big)  \, L(n)   
\end{eqnarray}
where  $ F_N = \big \lbrace  (\boldsymbol{P_0}, \boldsymbol{P_1}, \boldsymbol{P_2}, \boldsymbol{\rho}): \, P_0(n), P_1(n), P_2(n) \, \geq \, 0, \, 0 \, \leq  \, \rho(n) \, \leq 1, \, n  \in [1, N] \big \rbrace $, $\boldsymbol{\underline{E_{1,2}}} = \left[ \underline{E_{1,2}}(1) \, \underline{E_{1,2}}(2) \hdots \underline{E_{1,2}}(N)\right]$   with $\underline{E_{1,2}}(n)$ denoting the total harvested energy until the $n$th time interval and $E_{1,2}(1)$ is the energy available at the beginning, $\underline{E_{1,2}}(N)$ is the total harvested energy to be consumed until the final deadline time $T_f$ , $\overline{P_1}(n)$ and $\overline{P_2}(n)$ are the assigned power levels for the 1st and 2nd nodes, respectively, satisfying $\, P_1(n) + \rho^2(n) P_0(n) \leq  \overline{P_1}(n)$,  $P_2(n) + \big(1 -\rho(n)\big)^2 P_0(n) \leq \overline{P_2}(n)$  while the following causality conditions hold for the total consumed energy regarding its relation to the harvested energy for  $n = \left[1 , \, N-1\right]$,
\begin{eqnarray}       
\sum_{n = 1}^{N} \overline{P_1}(n) L(n) \, - \, \underline{E_1}(N)  &  =  &   0   \nonumber \\
\sum_{n = 1}^{N} \overline{P_2}(n) L(n) \, - \, \underline{E_2}(N)    &  =  &  0     \nonumber \\
\sum_{j = 1}^{n} \overline{P_1}(j) L(j) \, - \, \underline{E_1}(n)  &  \leq  &  0 \nonumber \\
\label{eq22}
\sum_{j = 1}^{n} \overline{P_2}(j) L(j) \, - \, \underline{E_2}(n)   &  \leq  &  0    
\end{eqnarray}
\end{lem}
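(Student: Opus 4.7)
The plan is to prove the characterization in two directions: achievability, meaning every triplet $(B_0, B_1, B_2)$ lying in the right-hand side of \eqref{lemma1_1}--\eqref{lemma1_2} can be produced by a code obeying the energy-causality constraints \eqref{eq22}, and converse, meaning no triplet outside the region is achievable with vanishing probability of error. The overall strategy is to partition the horizon $[0, T_f]$ into the $N$ intervals of lengths $L(1), \dots, L(N)$, treat each interval as an independent Gaussian MAC with common data at fixed per-interval powers, and glue the single-slot capacity characterization from \cite{b2} together with the energy-harvesting bookkeeping developed in \cite{b11, b25}.

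For achievability, first I fix any $(\boldsymbol{P_0}, \boldsymbol{P_1}, \boldsymbol{P_2}, \boldsymbol{\rho}) \in F_N$ meeting the implicit per-interval power-budget conditions $P_1(n) + \rho(n)^2 P_0(n) \leq \overline{P_1}(n)$ and $P_2(n) + (1-\rho(n))^2 P_0(n) \leq \overline{P_2}(n)$ together with the causality chain \eqref{eq22}. In each interval $n$ I invoke the single-slot capacity region \eqref{eq6} at instantaneous powers $(P_0(n), P_1(n), P_2(n))$, construct random Gaussian codebooks of block length proportional to $L(n)$, and split the three messages $(W_0, W_1, W_2)$ into per-interval submessages whose sizes match the per-interval individual and sum-rate bounds $C(P_k(n))$, $C(P_1(n)+P_2(n))$, $C(P_0(n)+P_1(n)+P_2(n))$. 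Because the codebooks are independent across intervals, the achievable totals $(B_0, B_1, B_2)$ are the sums of per-interval bits, producing precisely \eqref{lemma1_2}, while \eqref{eq22} guarantees that each $\overline{P_k}(n)$ is drawn only from energy already harvested by $t_n$.

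For the converse I take any $(2^{B_0}, 2^{B_1}, 2^{B_2})$ code respecting energy causality whose error probability vanishes. Using Fano's inequality and the standard mutual-information manipulations of \cite{b2}, the rates produced within interval $n$ are bounded by the region \eqref{eq6} evaluated at the time-averaged powers consumed in that interval; concavity of $C(\cdot)$ and Jensen's inequality allow me to replace instantaneous powers by these averages without loss. Summing the per-interval bounds across $n$, weighted by $L(n)$, recovers the four inequalities in \eqref{lemma1_2}, and the per-interval averages automatically obey \eqref{eq22} because no symbol can be transmitted using energy not yet harvested; the equalities in the first two lines of \eqref{eq22} reflect that any leftover harvested energy only shrinks the departure bits.

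I expect the principal technical obstacle to be handling the jointly encoded common message $W_0$ in the converse, since $W_0$ is \emph{a priori} permitted to be coded across the entire horizon rather than split cleanly across the $N$ intervals. I would address this through a genie-aided argument that reveals to the receiver, at the start of interval $n$, the portion of $W_0$ that has been reliably decoded by the end of interval $n-1$; combined with the single-slot converse of \cite{b2} applied inside each interval under its per-interval power constraint, this forces the aggregate sum-rate bounds in \eqref{lemma1_2} and closes the converse direction.
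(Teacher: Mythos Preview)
Your proposal is correct and rests on the same two pillars as the paper's proof---the single-slot capacity region of the Gaussian MAC with common data from \cite{b2} and the per-interval energy bookkeeping of \cite{b11}---so at the structural level the approaches coincide. The paper, however, does not carry out an explicit achievability/converse split: its proof is a brief sketch that invokes Lemmas~1--2 of \cite{b11} to justify that optimal power is constant within each harvesting interval, and then argues geometrically that the $N$-interval departure region is obtained by \emph{recursively} combining the per-interval 3D regions (taking any boundary point of the first-$k$-interval region as the origin for interval $k{+}1$). Your route is more explicit on the converse side, using Fano and Jensen directly and, in particular, flagging and handling the cross-interval coding of the common message $W_0$ via a genie argument; the paper simply absorbs this issue into the citation to \cite{b11} and the recursive-combination picture without calling it out. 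What your approach buys is a self-contained converse that does not depend on the reader reconstructing the argument in \cite{b11}; what the paper's approach buys is brevity and a direct structural analogy with the 2D (no common data) case.
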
 

\begin{proof}
The approaches similar to the study in \cite{b11} developed for the MAC without common data can be used, especially \textit{Lemma 1} and \textit{2}. It can be easily proved that the optimal power allocation policy does not change the transmission rate or power during energy harvesting times. Furthermore, the rate triplets $(B_0, B_1, B_2)$ given in (\ref{lemma1_2}) define the maximum departure region for any feasible power policy $(\boldsymbol{P_0}, \boldsymbol{P_1}, \boldsymbol{P_2}, \boldsymbol{\rho}) \, \in F_N$ by recursively combining the individual three dimensional (3D) capacity regions for each time interval instead of 2D capacity region combinations in \cite{b11}. In other words, for the first time interval, the 3D capacity region $B(\overline{P_1}(1), \overline{P_2}(1))$ gives the maximum departure region. Then, for the second time interval, any point on the capacity region of the first time interval can be taken as the origin and the overall capacity region for the total of first two time intervals are found by combining two of the regions as shown in (\ref{lemma1_2}). It can be applied recursively to the next time intervals.
\end{proof}
 
By using the similar approach for the capacity region of Gaussian MAC with common data \cite{b2}, the capacity region is unchanged if  the inequalities in $\, P_1(n) + \rho^2(n) P_0(n) \leq  \overline{P_1}(n)$,  $P_2(n) + \big(1 -\rho(n)\big)^2 P_0(n) \leq \overline{P_2}(n)$  are changed with equality. Furthermore, the departure region points on the boundary surface of $B_d(\boldsymbol{\underline{E_1}}, \boldsymbol{\underline{E_2}}, N)$ is the set of all departure triplets $ \boldsymbol{B} =  (B_0, B_1, B_2)$ such that $ \boldsymbol{B}$ is a solution to the following problem for some value  of $(\mu_0, \mu_1, \mu_2) \in \mathbb{R}_{+}^{3}$ \cite{b2, b11},
\begin{equation}
   \label{eqoptprob} 
 \max_{\boldsymbol{B}, \boldsymbol{\widetilde{E_1}}, \boldsymbol{\widetilde{E_2}}} \sum_{k = 0}^{2}\mu_k B_k 
  \mbox{  s.t.  } ( \boldsymbol{B}, \boldsymbol{\widetilde{E_1}}, \boldsymbol{\widetilde{E_2}} ) \in \mathbb{P} 
\end{equation}
where $ \mathbb{P} =  \big \lbrace (\boldsymbol{B}, \boldsymbol{\widetilde{E_1}}, \boldsymbol{\widetilde{E_2}} ) : \boldsymbol{\widetilde{E_1}}, \boldsymbol{\widetilde{E_2}} \in \mathbb{R}_{+}^{N}, \boldsymbol{B} \in B_d(\boldsymbol{\widetilde{E_1}}, \boldsymbol{\widetilde{E_2}}, N)  \big \rbrace  $,  $ P_{k}(n)  \geq \, 0, \, 0 \, \leq  \, \rho(n) \, \leq 1$, the conditions in (\ref{eq22}) are satisfied by replacing $\overline{P_{1,2}}(n)$ with $\widetilde{P_{1,2}}(n)$ and $\underline{E_{1,2}}(m)$  with $\widetilde{E_{1,2}}(m)$, $\widetilde{E_{1,2}}(m) \leq \underline{E_{1,2}}(m)$ and $\widetilde{E_{1,2}}(N) = \underline{E_{1,2}}(N)$ for $n  \in [1, N]$, $m \in [2, N]$ and $k \in [0, 2]$. Therefore, by maximizing  $\sum_{k = 0}^{2}\mu_k B_k  $ for specific $\widetilde{E_{1,2}}(n)$ the points on the boundary region can be achieved. Then, we can prove that the departure region is convex.

\begin{lem} 
\label{lemma2} 
$B_d(\boldsymbol{\widetilde{E_1}}, \boldsymbol{\widetilde{E_2}}, N)$ is a convex region and  $\mathbb{P}$ is a convex set.
\end{lem}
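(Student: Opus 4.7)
The plan is to prove convexity by a direct convex combination of feasible power policies, exploiting concavity and monotonicity of the capacity function $C(\cdot)$ together with an auxiliary concavity of the common-message power expression $P_0 = (\sqrt{\widetilde{P_1} - P_1} + \sqrt{\widetilde{P_2} - P_2})^2$, viewed as a function of $(\widetilde{P_1}, \widetilde{P_2}, P_1, P_2)$. Given two departure triplets $\boldsymbol{B}^{(1)}, \boldsymbol{B}^{(2)} \in B_d(\widetilde{\boldsymbol{E_1}}, \widetilde{\boldsymbol{E_2}}, N)$ achieved respectively by feasible power policies $(\boldsymbol{P_k}^{(j)}, \boldsymbol{\rho}^{(j)})$ with envelope powers $\widetilde{\boldsymbol{P_{1,2}}}^{(j)}$ for $j=1,2$, and any $\alpha \in [0,1]$, I would construct a mixed policy slot-by-slot by setting $P_k^{mix}(n) = \alpha P_k^{(1)}(n) + (1-\alpha) P_k^{(2)}(n)$ for $k=1,2$, analogously for $\widetilde{P_{1,2}}^{mix}(n)$, and then forcing the equalities $\widetilde{P_1}^{mix}(n) = P_1^{mix}(n) + (\rho^{mix}(n))^2 P_0^{mix}(n)$ and $\widetilde{P_2}^{mix}(n) = P_2^{mix}(n) + (1-\rho^{mix}(n))^2 P_0^{mix}(n)$ to determine $P_0^{mix}(n)$ and $\rho^{mix}(n) \in [0,1]$ consistently with the characterization from Section \ref{sysmodel}.

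The first step is to verify feasibility of the mixed policy. The causality inequalities and the terminal-energy equalities in (\ref{eq22}) depend only on $\widetilde{P_{1,2}}(n)$, and do so linearly, so convex combinations of policies immediately preserve them. Non-negativity of $P_k^{mix}$ and the bound $\rho^{mix}(n) \in [0,1]$ follow directly from the construction, so the mixed policy lies in $F_N$.

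The second and main step is to verify the four capacity inequalities in (\ref{lemma1_2}) for the candidate triplet $\alpha\boldsymbol{B}^{(1)} + (1-\alpha)\boldsymbol{B}^{(2)}$. The bounds on $B_1$, $B_2$, and $B_1 + B_2$ follow slot-by-slot from concavity and monotonicity of $C(P) = W_{Tot}\log(1 + P/A)$, since their arguments are linear functions of $P_{1,2}^{mix}(n)$. The obstacle is the $B_0 + B_1 + B_2$ bound, because $P_0$ is nonlinear in its defining variables. Here I would show that the map $f(x,y) = (\sqrt{x} + \sqrt{y})^2 = x + y + 2\sqrt{xy}$ is concave on $\mathbb{R}_{+}^{2}$ (the geometric-mean term $\sqrt{xy}$ has a negative semi-definite Hessian there), which yields
\[
P_0^{mix}(n) \, \geq \, \alpha\, P_0^{(1)}(n) + (1-\alpha)\, P_0^{(2)}(n).
\]
Adding the linear contributions $P_1^{mix}(n) + P_2^{mix}(n)$ and invoking concavity and monotonicity of $C$ once more delivers the required triple-sum bound.

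Finally, convexity of $\mathbb{P}$ would follow by lifting the same construction to the augmented variables $(\boldsymbol{B}, \widetilde{\boldsymbol{E_1}}, \widetilde{\boldsymbol{E_2}})$: the extra constraints $\widetilde{E_{1,2}}(m) \leq \underline{E_{1,2}}(m)$ and $\widetilde{E_{1,2}}(N) = \underline{E_{1,2}}(N)$ are linear in $(\widetilde{\boldsymbol{E_1}}, \widetilde{\boldsymbol{E_2}})$ and hence preserved under convex combinations, while the causality constraints couple $(\widetilde{\boldsymbol{E_1}}, \widetilde{\boldsymbol{E_2}})$ to the policy linearly, so the slot-wise mixing already used for $B_d$ extends directly. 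The decisive technical ingredient throughout is the concavity of the common-message power $P_0$ in its defining variables; once that is established, the remainder is a routine chain of concavity and monotonicity arguments analogous to the Gaussian MAC without common data treated in \cite{b11}.
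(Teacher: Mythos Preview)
Your argument is correct, but it differs from the paper's in where the ``hard'' step is placed. The paper, following \cite{b2}, mixes all three powers $P_0, P_1, P_2$ directly as convex combinations, so the four capacity inequalities in (\ref{lemma1_2}) follow immediately from concavity of $C$ alone; the work is then shifted to verifying the per-user envelope constraints, which is done by constructing two auxiliary coefficients $\rho_1(n), \rho_2(n)$ (so that each matches one user's budget exactly) and proving $(1-\rho_1(n))^2 \le (1-\rho_2(n))^2$, whence a single $\rho_1$ satisfies both budgets. You instead mix the envelope powers $\widetilde{P_{1,2}}$ and the private powers $P_{1,2}$ linearly, which makes the causality and budget constraints trivial, and push the difficulty into the sum-rate constraint via the concavity of $(\sqrt{x}+\sqrt{y})^2$ on $\mathbb{R}_+^2$. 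The two tricks are essentially dual---both ultimately rest on the concavity of $\sqrt{xy}$---but your route is arguably more transparent, since it isolates one clean concavity fact rather than an ad hoc comparison of two auxiliary $\rho$'s. Either approach extends to $\mathbb{P}$ exactly as you describe, since the added constraints on $\widetilde{E_{1,2}}$ are linear.
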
 
The proof of Lemma \ref{lemma2} is given in  Appendix \ref{app0}.
 
By using the the convexity, KKT conditions can be utilized to transform the problem into more convenient forms to be solved as the following \cite{b26},
\begin{eqnarray}
   \label{eqoptprob2} 
  \min_{(\boldsymbol{B},  \boldsymbol{P}, \boldsymbol{\rho})}  \, \, \,    - \sum_{k = 0}^{2}  \mu_k B_k + \sum_{i = 1}^{N} \lambda_{1,i} \left(   \sum_{n = 1}^{i} \bigg( P_1(n) + \rho^2(n) P_0(n) \bigg) L(n)   -   \underline{E_1}(i)  \right)    && \nonumber \\  
 +  \sum_{i = 1}^{N} \lambda_{2,i} \left(  \sum_{n = 1}^{i} \bigg(  P_2(n) + \big( 1 -  \rho(n)\big)^2 P_0(n) \bigg) L(n) \, - \,   \underline{E_2}(i)  \right)  \, \, \, \, \, \, \, \, \, \, \, \, \, \, \, \,   \, \, \, \, \, \, \, \, \, \, \, \, \, \, \, \, \, \, \, \, \, \, \, && \nonumber \\ 
 + \sum_{i = 1}^{N}    \bigg(  - \lambda_{3,i}   \rho(i)   + \lambda_{4,i} \big(   \rho(i)  -1 \big) - \lambda_{5,i}   P_1(i)     - \lambda_{6,i}  P_2(i)  -  \lambda_{7,i}   P_0(i)  \bigg)    L(i)   \, \, \, &&
\end{eqnarray}
where $B_k$ corresponds to the extension of the boundary points defined in (\ref{eq12}) to the $N$ time intervals, e.g., point $T$ for $\mu_1 \geq \mu_2 \geq \mu_0$ s.t. $B_0 = \sum_{n = 1}^{N} C(P^s_{0,2}(n)) - C\big(P^s_{1,2}(n)\big)$, $B_1 = \sum_{n = 1}^{N} C\big(P_1(n)\big)$, $B_2 = \sum_{n = 1}^{N} C\big(P^s_{1,2}(n)\big) - C\big(P_1(n)\big)$, and the following constraints are satisfied for  $n \in  [1,N]$, $ m \in [1,N-1]$,  $p \in [3, 7]$ and $k \in [0, 2]$,
\begin{eqnarray}
   \label{c1_eq}
    P_{k}(n) \geq 0, \, 0 \leq \rho(n) \leq 1,  \, \lambda_{p,n} \geq 0, \,   \lambda_{1,m} \geq 0,   \, \lambda_{2,m} \geq 0  && \\
   \label{c2_eq}   
  \sum_{n = 1}^{N} \bigg( P_1(n) + \rho^2(n) P_0(n) \bigg) L(n) \, - \, \underline{E_1}(N)   =  0   && \\
   \label{c3_eq}    
   \sum_{n = 1}^{N} \bigg( P_2(n) + \big( 1 -  \rho(n)\big)^2 P_0(n) \bigg) L(n) \, - \, \underline{E_2}(N)  = 0 && \\
   \label{c4_eq}    
     \sum_{n = 1}^{m} \bigg( P_1(n) + \rho^2(n) P_0(n) \bigg) L(n) \, - \, \underline{E_1}(m) \leq 0  &&  \\
   \label{c4_eq1}    
     \sum_{n = 1}^{m} \bigg( P_2(n) + \big( 1 -  \rho(n) \big)^2 P_0(n) \bigg) L(n) \, - \, \underline{E_2}(m) \leq 0  &&  \\
      \label{c5_eq}    
    \lambda_{1,m} \Bigg( \sum_{n = 1}^{m} \bigg( P_1(n) + \rho^2(n) P_0(n) \bigg) L(n) \, - \, \underline{E_1}(m)  \Bigg) = 0 &&  \\
        \label{c5_eq1}  
      \lambda_{2,m} \Bigg( \sum_{n = 1}^{m} \bigg( P_2(n) + \big( 1 -  \rho(n) \big)^2 P_0(n) \bigg) L(n) \, - \, \underline{E_2}(m) \Bigg) = 0 &&  \\
      \label{c5_eq2a}  
      \lambda_{3,n}  \, \rho(n)  =0; \, \, \lambda_{4,n} \big(   \rho(n)  -1 \big) = 0; &&  \\  
      \label{c5_eq2} 
      \, \,\lambda_{5,n} P_1(n)=0;  \, \, \lambda_{6,n} P_2(n)=0;  \, \, \lambda_{7,n} P_0(n)=0    &&  
\end{eqnarray}
Taking the derivative with respect to  $\rho(n)$ for $n \in [1, n]$ and equalizing to zero gives the following,  
\begin{eqnarray}
       \label{eq53} 
  \rho(n) &\mbox{:}&   P_0(n)\left[ \rho(n) \left(2  \lambda_{1,n}^{p} + 2  \lambda_{2,n}^{p} \right) \right]  
  =  P_0(n) \left[ 2  \lambda_{2,n}^{p}   \right] +   \lambda_{3,n} - \lambda_{4,n}    
 \end{eqnarray}
where $\lambda_{s,n}^{p} =  \sum_{i = n}^{N} \lambda_{s,i} $ for $s \in [1, 2]$. If $0 < \rho(n) < 1$ and $P_0(n) > 0$, the quadratic expression can be removed from the Lagrangian function by calculating $\rho(n)$ as the following,
\begin{equation}
       \label{eq54} 
\rho(n) = \frac{ \lambda_{2,n}^{p}   }{\lambda_{2,n}^{p} +   \lambda_{1,n}^{p}  }
\end{equation}
For the other cases where $\rho(n) = 0$ or $\rho(n) = 1$ and $P_0(n) > 0$, the quadratic terms similarly disappear leading to a convex set of equations. 

Now, the problem in (\ref{eqoptprob2}) can be solved as a convex optimization problem leading to the unique global optimum solution. The solution is found by solving KKT optimality conditions with unique KKT multipliers for the global solution.

\section{Optimal Scheduling Solution and Water-Filling Algorithm }
\label{optw}

$(B_0, B_1, B_2 )$ departure triplets are found by  varying the rewarding values of $(\mu_1, \mu_2, \mu_0)$ such that $\mu_1 \geq  \mu_2 \geq  \mu_0 $, $\mu_1 \geq  \mu_0 \geq  \mu_2 $ and $\mu_0 \geq  \mbox{max}(\mu_1, \mu_2)$ are utilized for the half part of the capacity region. The other half part can be found by changing the roles of the nodes and $\mu_1$ and $\mu_2$. For the case of  $( \mu_1 \geq  \mu_2 \geq  \mu_0 )$, point $T$ in the boundary region maximizes the capacity and $\mu_1 B_1 +  \mu_2 B_2 + \mu_0 B_0  $ becomes $\mu_0  \sum_{n = 1}^{N}  C \big( P^s_{0,2}(n)  \big) + (\mu_2 \, - \, \mu_0)  \sum_{n = 1}^{N} C \big( P^s_{1,2}(n) \big) + (\mu_1 \, - \, \mu_2)    \sum_{n = 1}^{N} C \big(  P_1(n) \big) $. For the case of, $(\mu_1 \geq  \mu_0 \geq  \mu_2)$, $P_2$ is equal to zero until $T_f$ where the boundary point $S$ maximizes the capacity and $\mu_1 B_1 +  \mu_2 B_2 + \mu_0 B_0  $ becomes  $\mu_0 \,\sum_{n = 1}^{N}  C \big( P^s_{0,1}(n)  \big) + (\mu_1 \, - \, \mu_0) \sum_{n = 1}^{N} C \big( P_1(n) \big)$. And finally, for $(\mu_0 \geq  \mbox{max}(\mu_1, \mu_2) )$, in the optimum solution, $P_1$ and $P_2$ will be zero and all the power will be consumed for the common data with $   \mu_0  \sum_{n = 1}^{N} C \big( P_0(n)   \big)  $ in the objective function. Now, the solution for three different cases of rewarding values $( \mu_1,  \mu_2 , \mu_0 )$ are analyzed.

\subsection{Optimum Scheduling Solution for Capacity Regions }
\subsubsection{ $( \mu_1 \geq  \mu_2 \geq  \mu_0 )$}
This case corresponds to the boundary point $T$, and taking the derivative of (\ref{eqoptprob2}) with respect to  $P_1(n), \, P_2(n), \, P_0(n), \, \rho(n)$ and equalizing to zero give the following KKT equalities for $n \in \left[1,N \right]$,  
\begin{eqnarray}
   \label{eq50} 
  P_{1,n}&\mbox{:}&   \frac{\mu_0}{1 + \sum_{k = 0}^{2} P_k^{'}(n) } + \frac{\mu_2 - \mu_0}{1 +  P_1^{'}(n) + P_2^{'}(n) }  
   + \frac{\mu_1 - \mu_2}{1 +  P_1^{'}(n) } = \lambda_{1,n}^{p} - \lambda_{5,n} \, \, \, \, \, \,  \\
     \label{eq51} 
    P_{2,n}&\mbox{:}&   \frac{\mu_0}{1 +  \sum_{k = 0}^{2} P_k^{'}(n) } + \frac{\mu_2 - \mu_0}{1 +  P_1^{'}(n) + P_2^{'}(n) }   
   = \lambda_{2,n}^{p} - \lambda_{6,n}   \\
       \label{eq52} 
    P_{0,n}&\mbox{:}&   \frac{\mu_0}{1 +  \sum_{k = 0}^{2} P_k^{'}(n) }   =  
     \lambda_{1,n}^{p} \rho^2(n) + \lambda_{2,n}^{p} \big(1 - \rho(n)\big)^2 - \lambda_{7,n}    
 \end{eqnarray}
for $\rho(n)$ in (\ref{eq54}) where  $P_{k}^{'} = P_k \, / \, A$ for $k \in [0,2]$ and $\lambda^{'} = \lambda \, A \, \mbox{log}_{\mbox{e}} 2  \, / \, W_{Tot}$ is replaced with $\lambda$ in order to simplify the notation without changing the final solution for $P_{i}^{'}$. For $P_0^{'}(n) > 0$, three different regions of  $\rho(n) $ can be observed, i.e., $0 < \rho(n) < 1$, $\rho(n) = 0$ and $\rho(n) =1$. For the first case, i.e., $0 < \rho(n) < 1$, $\rho(n)$ is obtained in (\ref{eq54}). If the resulting expression is inserted into (\ref{eq52}),  the following can be obtained,
\begin{equation}
       \label{eq55} 
 P_{0,n} \mbox{:  }    \frac{\mu_0}{1 +  P_0^{'}(n) + P_1^{'}(n) + P_2^{'}(n) }   = g ( \lambda_{1,n}^{p},  \lambda_{2,n}^{p})  
\end{equation}
where  $g ( \lambda_{1,n}^{p},  \lambda_{2,n}^{p})$ is defined as
\begin{equation}
       \label{eq55a} 
   g ( \lambda_{1,n}^{p},  \lambda_{2,n}^{p})     \triangleq \frac{  \lambda_{1,n}^{p}  \lambda_{2,n}^{p}}{ \left(  \lambda_{1,n}^{p} + \lambda_{2,n}^{p}  \right) }   
\end{equation}
Furthermore, the overall power consumed in the time interval $n$ should satisfy the total power equalities as $\overline{P_1}(n) = P_1(n) + \rho^2(n) P_0(n)$ and $\overline{P_2}(n) = P_2(n) + \big( 1 -  \rho(n)\big)^2 P_0(n)$. If the expressions of $\rho(n)$ and $P_0^{'}(n)$ in (\ref{eq54}) and (\ref{eq55}), respectively, in terms of $\lambda_{1,n}^{p}$ and $\lambda_{2,n}^{p}$ are inserted to these power equalities, the resulting equations are obtained for $\lambda_{1,n}^{p}$ and $\lambda_{2,n}^{p}$ in terms of $\big(P_0(n), P_1(n), P_2(n), \overline{P_1}(n), \overline{P_2}(n)\big)$ as the following,
\begin{equation}
       \label{eqlambdaopt} 
\lambda_{k,n}^{p}    = \frac{\mu_0}{1 +  \sum_{k = 0}^{2} P_k^{'}(n)}\frac{1 + \chi_k(n)}{\chi_k(n)}; \, \, \, \, \chi_k(n) = \sqrt{ \frac{\overline{P_k}(n) - P_k(n)}{\overline{P_{3-k}}(n) - P_{3-k}(n)}}   
\end{equation}
where $k \in [1,2]$ and  $n \in [1,N]$.
For the second and third cases of $\rho(n)$, i.e., $\rho(n) = 0$ and $\rho(n) =1$, the following is obtained from (\ref{eq52}),
\begin{equation}
P_{0,n} \mbox{:  }    \frac{\mu_0}{1 +  P_0^{'}(n) + P_1^{'}(n) + P_2^{'}(n) }   = \begin{cases}   \lambda_{2,n}^{p}     &\mbox{if }\rho(n) = 0\\ 
 \lambda_{1,n}^{p}     & \mbox{if } \rho(n) = 1. \end{cases}  
\end{equation}  
Therefore,    $\lambda_{1,n}^{p}$ and $\lambda_{2,n}^{p}$ are expressed in terms of $\big(P_0(n), P_1(n), P_2(n), \overline{P_1}(n), \overline{P_2}(n)\big)$. For the other cases, the equalities in (\ref{eq50}-\ref{eq52}) can be used to extract the values of $\lambda_{1,n}^{p}$ and $\lambda_{2,n}^{p}$ whenever the respective Lagrange multipliers, i.e., $\lambda_5(n)$ or $\lambda_6(n)$, are zero corresponding to $P_1(n) > 0$ or $P_2(n) > 0$, respectively. Now, the water-filling algorithm for the defined solution is provided finding the global optimum solution in an efficient way.

\begin{thm} 
\label{lemma3} 
The optimization regarding the equalities in (\ref{eq50}-\ref{eq52}) is achieved by defining and equalizing the following water levels  $WL_k(n)$ for each time interval $n \in [1,N]$ such that $WL_k(n) \leq WL_k(n+1)$ is satisfied for $ k \in [1,2,3]$ and the water levels are defined   as follows
\begin{eqnarray}
WL_1(n) & =& \big( \frac{\mu_0}{1+ \sum_{k=1}^{3}P_k^{'}(n)} + \frac{\mu_2 - \mu_0}{1+ \sum_{k=1}^{2}P_k^{'}(n)} +   \frac{\mu_1 - \mu_2}{1+ P_1^{'}(n) }  \big)^{-1} \\
WL_2(n) & =& \big( \frac{\mu_0}{1+ P_1^{'}(n) + P_2^{'}(n) + P_0^{'}(n)} + \frac{\mu_2 - \mu_0}{1+ P_1^{'}(n) + P_2^{'}(n)}  \big)^{-1}    \, \, \, \, \, \, \, \, \, \\
WL_3(n) & =& \big( \frac{\mu_0}{1+ P_1^{'}(n) + P_2^{'}(n) + P_0^{'}(n)}     \big)^{-1}
 \end{eqnarray}
and $WL_4(n)$, $WL_5(n)$ obtained by (\ref{eq50} - \ref{eq52}), (\ref{eqlambdaopt}) are defined in Table \ref{taboptlambda}  where $+$ denotes $ > 0$ and the water levels satisfy the following,
\begin{equation}
\begin{cases}   
WL_4(n) = WL_4(n+1)    &\mbox{if }\lambda_{1,n}^{p} = \lambda_{1,n+1}^{p}\\ 
WL_5(n) = WL_5(n+1)   &\mbox{if }\lambda_{2,n}^{p} = \lambda_{2,n+1}^{p}\\ 
WL_{i}(n) = WL_i(n+1), \, i =[4,5]   & \mbox{if }\lambda_{1,n}^{p} = \lambda_{1,n+1}^{p} \,  \& \, \lambda_{2,n}^{p} = \lambda_{2,n+1}^{p}    
\end{cases}  
\end{equation}  
\end{thm}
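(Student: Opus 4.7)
The plan is to derive the water-level characterization directly from the KKT stationarity conditions (\ref{eq50})--(\ref{eq52}) by first inverting them to obtain explicit expressions for the water levels, and then invoking the monotonicity of the cumulative multipliers $\lambda_{k,n}^{p} = \sum_{i=n}^{N}\lambda_{k,i}$ to establish the non-decreasing property $WL_k(n) \leq WL_k(n+1)$ between consecutive intervals.

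First, I would reinterpret the left-hand sides of (\ref{eq50})--(\ref{eq52}) as $1/WL_1(n)$, $1/WL_2(n)$, and $1/WL_3(n)$ respectively, so that after rearrangement each water level becomes a function of the remaining Lagrange multipliers. In the interior case, where $P_1(n), P_2(n), P_0(n) > 0$ and $0 < \rho(n) < 1$, complementary slackness (\ref{c5_eq2}) forces $\lambda_{5,n} = \lambda_{6,n} = \lambda_{7,n} = 0$, and substituting (\ref{eq54}) into (\ref{eq52}) collapses the right-hand side of the $P_0$ stationarity to $g(\lambda_{1,n}^{p},\lambda_{2,n}^{p})$ as in (\ref{eq55a}). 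Thus, in this regime, $WL_1(n) = 1/\lambda_{1,n}^{p}$, $WL_2(n) = 1/\lambda_{2,n}^{p}$, and $WL_3(n) = 1/g(\lambda_{1,n}^{p},\lambda_{2,n}^{p})$.

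Second, I would exploit two elementary facts about the cumulative multipliers. By constraint (\ref{c1_eq}), each $\lambda_{1,i},\lambda_{2,i} \geq 0$, so the partial sums $\lambda_{k,n}^{p}$ are non-increasing in $n$; and by the complementary slackness conditions (\ref{c5_eq})--(\ref{c5_eq1}), one has $\lambda_{k,m} = 0$, and therefore $\lambda_{k,n}^{p} = \lambda_{k,n+1}^{p}$, whenever the $m$-th energy causality constraint is strictly slack. Because the function $g(x,y) = xy/(x+y)$ is increasing in each positive argument, these monotonicity properties transfer directly to give $WL_1(n) \leq WL_1(n+1)$, $WL_2(n) \leq WL_2(n+1)$, and $WL_3(n) \leq WL_3(n+1)$, together with the stated equality transitions whenever the cumulative multipliers coincide. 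For the boundary regimes ($\rho(n) \in \{0,1\}$, or some $P_k(n) = 0$) I would rely on Table \ref{taboptlambda} and the auxiliary levels $WL_4(n)$, $WL_5(n)$: each of these is, by construction, a reciprocal of a non-negative combination of $\lambda_{1,n}^{p}$ and $\lambda_{2,n}^{p}$ alone, so the same non-increasing argument applies, and the equality-transition rules stated in the theorem follow immediately from the fact that a level built purely from $\lambda_{k,n}^{p}$ is constant between $n$ and $n+1$ precisely when $\lambda_{k,n}^{p} = \lambda_{k,n+1}^{p}$.

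The \emph{hard part} will be the combinatorial case analysis at the regime boundaries. When the active KKT case changes between intervals $n$ and $n+1$---for instance, $\rho(n) = 1$ but $0 < \rho(n+1) < 1$, or $P_2(n) = 0$ while $P_2(n+1) > 0$---the water-level comparison must be performed across different defining formulas, and one must verify that the transition is consistent with the non-negativity of the slack multipliers $\lambda_{3,n}, \lambda_{4,n}, \lambda_{5,n}, \lambda_{6,n}, \lambda_{7,n}$. This is essentially a careful bookkeeping exercise of pairing up the reciprocal forms of each KKT equation across regimes; Table \ref{taboptlambda} does the heavy lifting, and I expect the cleanest organization is to show that, whichever regime is active on either side, the required inequality reduces in each case to the already-established $\lambda_{k,n}^{p} \geq \lambda_{k,n+1}^{p}$ for $k \in \{1,2\}$, after which uniqueness of the KKT multipliers (guaranteed by the convexity established in Lemma \ref{lemma2}) ensures the water-level characterization pins down the global optimum.
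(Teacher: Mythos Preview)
Your proposal is correct and follows essentially the same route as the paper's own proof: both arguments hinge on the non-increasing property of the cumulative multipliers $\lambda_{k,n}^{p}$, the monotonicity of $g(\cdot,\cdot)$, the identification of the water levels with reciprocals of these quantities via (\ref{eq50})--(\ref{eq52}) and (\ref{eqlambdaopt}), and the case analysis across the optimality regions of Table~\ref{tab_app1} (the paper phrases this as up to $8\times 8$ region combinations between adjacent intervals, which is exactly the ``combinatorial case analysis at the regime boundaries'' you flag as the hard part). The only cosmetic difference is that the paper frames the single-interval optimum via the descent algorithm of \cite{b2} before invoking the multiplier monotonicity, whereas you go directly from the KKT stationarity to the reciprocal water-level expressions; the substance is the same.
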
 
   
\newcolumntype{M}[1]{>{\centering\arraybackslash}m{#1}}
\renewcommand{\arraystretch}{1.7}
\begin{table*}[t!]
\caption{The water levels  for the time interval $n$ for $(\mu_1  \geq  \mu_2  \geq \mu_0 )$}
\centering
\footnotesize
\begin{tabular}{| M{3.9cm}|M{3.9cm}|p{0.6cm}|p{0.6cm}|p{0.6cm}|p{0.6cm}|p{0.6cm}|}  
\hline
$\big( WL_4(n) \big)^{-1} $   & $\big( WL_5(n) \big)^{-1}$  & $\overline{P_1}$  & $\overline{P_2}$  & $P_0$  & $P_1$  & $P_2$   \\  \hline 
         - &   \vspace{-0.5em}$\frac{\mu_2}{1 +  P_2^{'}}$     & $ 0$  & $+$  & $ 0$  & $ 0$  & $+$   \\\hline 
             \vspace{-0.5em} $\frac{\mu_1}{1 +  P_1^{'}}$     &     -  & $+$  & $0$  & $ 0$  & $+$  & $0$   \vspace{-0.5em}\\\hline 
                        \vspace{-0.5em}  $\frac{\mu_2}{1 + P_1^{'} + P_2^{'}}  + \frac{\mu_1 - \mu_2}{1 +  P_1^{'}} $        &        \vspace{-0.5em}   $\frac{\mu_2}{1 + P_1^{'} + P_2^{'}}$ & $ +$  & $+$  & $ 0$  & $+$  & $+$   \\\hline 
 \vspace{-0.5em} $\frac{\mu_0}{1 +    P_0^{'} +   P_1^{'} +   P_2^{'}} \frac{1 + \chi_1}{\chi_1} $   &  \vspace{-0.5em} $\frac{\mu_0}{1 +   P_0^{'} +   P_1^{'} +   P_2^{'}} \frac{1 + \chi_2}{\chi_2} $  & $+$  & $+$  & $+$  & $0$ &   $  0$  \\  \hline 
    \vspace{-0.5em}   $\frac{\mu_0}{1 +   P_0^{'} +   P_1^{'} +   P_2^{'}} \frac{1 + \chi_1}{\chi_1} $          &   \vspace{-0.5em}  $\frac{\mu_0}{1 + P_0^{'} + P_2^{'}}  + \frac{\mu_2 - \mu_0}{1 +  P_2^{'}} $     & $ +$  & $+$  & $ +$  & $ 0$  & $+$   \\\hline 
    \vspace{-0.5em}          $\frac{\mu_0}{1 + P_0^{'} + P_1^{'}}  + \frac{\mu_1 - \mu_0}{1 +  P_1^{'}} $    &    \vspace{-0.5em} $\frac{\mu_0}{1 +   P_0^{'} +   P_1^{'} +   P_2^{'}} \frac{1 + \chi_2}{\chi_2} $    & \vspace{-0.5em} $ +$  & $+$  & $ +$  & $ +$  & $0$   \\\hline 
    \vspace{-0.5em} $\frac{\mu_0}{1 +  P_0^{'} +   P_1^{'} +   P_2^{'}} \frac{1 + \chi_1}{\chi_1} $                         &    \vspace{-0.5em}   $\frac{\mu_0}{1 +   P_0^{'} +   P_1^{'} +   P_2^{'}} \frac{1 + \chi_2}{\chi_2} $     & $+$  & $ +$  & $+$  & $+$  & $+$\\  \hline 
\end{tabular}
\label{taboptlambda}
\end{table*}
\renewcommand{\arraystretch}{1}
The proof of Theorem \ref{lemma3} is given in Appendix \ref{app0}.

The constraints and the components of the objective functions for the other 2 cases of $(\mu_1, \mu_2, \mu_0 )$, i.e., $\mu_1 \geq  \mu_0 \geq  \mu_2 $ and $\mu_0 \geq \mbox{max}(\mu_1, \mu_2)$,  are found with a similar approach to the case 1. Therefore, in the following, only the KKT conditions are presented for thee cases without the detailed proofs.  
 
\subsubsection{ $\mu_1 \geq  \mu_0 \geq  \mu_2 $}  

This case corresponds to the boundary point $S$, and similar to the approach performed for $\mu_1 \geq  \mu_2 \geq  \mu_0 $, taking the derivative of (\ref{eqoptprob2}) with respect to  $P_1(n),  P_0(n), \, \rho(n)$ gives the following for $n \in \left[1,N \right]$,
\begin{eqnarray}
       \label{eq56} 
  P_{1,n}&\mbox{:}&   \frac{\mu_0}{1 +  P_0^{'}(n) + P_1^{'}(n)   } + \frac{\mu_1 - \mu_0}{1 +  P_1^{'}(n)   }   
    = \lambda_{1,n}^{p} - \lambda_{5,n}   \\
       \label{eq57} 
    P_{0,n}&\mbox{:}&   \frac{\mu_0}{1 +  P_0^{'}(n) + P_1^{'}(n)   }   =  
    \lambda_{1,n}^{p} \rho^2(n) + \lambda_{2,n}^{p} \big(1 - \rho(n)\big)^2 - \lambda_{7,n}    \, \, \, \,  \, \, \\
       \label{eq58} 
  \rho(n)&\mbox{:}&   P_0(n)\left[ \rho(n) \left(2  \lambda_{1,n}^{p} + 2  \lambda_{2,n}^{p} \right)  \right]  
 =  P_0(n) \left[ 2  \lambda_{2,n}^{p}   \right] + \lambda_{3,n} - \lambda_{4,n}   
 \end{eqnarray}
If $0 < \rho(n) < 1$, (\ref{eq54}) is satisfied and putting into (\ref{eq57}), the equation becomes as the following,
\begin{equation}
       \label{eq60} 
 P_{0,n} \mbox{:  }    \frac{\mu_0}{1 +  P_0^{'}(n) + P_1^{'}(n)  }   = g ( \lambda_{1,n}^{p},  \lambda_{2,n}^{p} ) - \lambda_{7,n}    
\end{equation} 

It can be observed that $P_2(n)$ becomes always zero and the water level $WL_2(n)$ is absent compared with the first case. Then, it can be proved that $WL_1(n) = \big( \frac{\mu_0}{1+  P_0^{'}(n)+  P_1^{'}(n)} + \frac{\mu_1 - \mu_0}{1+ P_1^{'}(n)} \big)^{-1}$ and    
$WL_3(n)  =  \big( \frac{\mu_0}{1+ P_1^{'}(n)  + P_0^{'}(n)} \big)^{-1}$. Therefore, the optimization algorithm will check for the inequalities $WL_{i}(n) \geq WL_{i}(n+1)$ for $i = 1$ and $i = 3$, and equalities $WL_{i}(n)= WL_{i}(n+1)$ for $i = 4$ or $i = 5$. Furthermore, the table regarding $WL_{4,5}(n)$ is modified as shown in Table \ref{taboptlambda2}.

\renewcommand{\arraystretch}{1.7}
\begin{table*}[t!]
\caption{The water levels for the time interval $n$ for $(\mu_1  \geq  \mu_0 \geq \mu_2 )$}
\begin{center}
\footnotesize
\begin{tabular}{| M{3.9cm}|M{3.9cm}|M{0.6cm}|M{0.6cm}|M{0.6cm}|M{0.6cm}|M{0.6cm}|}
\hline
$\big( WL_4 \big)^{-1} $   & $\big(  WL_5  \big)^{-1}$  & $\overline{P_1}$  & $\overline{P_2}$  & $P_0$  & $P_1$  & $P_2$    \\\hline 
 - &   \vspace{-0.5em}$\frac{\mu_2}{1 +  P_2^{'}}$     & $ 0$  & $+$  & $ 0$  & $ 0$  & $+$   \\\hline 
            \vspace{-0.5em}  $\frac{\mu_1}{1 +  P_1^{'}}$     &     -  & $+$  & $0$  & $ 0$  & $+$     & $ 0$ \\\hline 
\vspace{-0.5em}  $\frac{\mu_0}{1 +  P_0^{'} +  P_1^{'}} \frac{1 + \chi_1}{\chi_1}  $   &  \vspace{-0.5em} $\frac{\mu_0}{1 + P_0^{'} +  P_1^{'}} \frac{1 + \chi_2}{\chi_2}  $  & $+$  & $+$  & $+$  & $0$   & $ 0$ \\\hline%
 \vspace{-0.5em}   $\frac{\mu_0}{1 + P_0^{'} + P_1^{'}}  + \frac{\mu_1 - \mu_0}{1 +  P_1^{'}} $    &    \vspace{-0.5em} $\frac{\mu_0}{1 + P_0^{'} +  P_1^{'}} \frac{1 + \chi_2}{\chi_2}  $    & $ +$  & $+$  & $ +$  & $ +$  & $ 0$  \\\hline 
\end{tabular}
\end{center}
\label{taboptlambda2}
\end{table*}
\renewcommand{\arraystretch}{1}
  
\subsubsection{ $\mu_0 \geq  \mbox{max}(\mu_1, \mu_2) $}   

This case corresponds to the boundary point $Q$ and taking the derivative of (\ref{eqoptprob2}) with respect to  $ P_0(n), \, \rho(n)$ and equalizing to zero give  the following KKT conditions for $n \in \left[1,N \right]$, 
\begin{equation}
       \label{eq61} 
     P_{0,n} \mbox{:  }      \frac{\mu_0}{1 +  P_0^{'}(n)   } =  \lambda_{1,n}^{p} \rho^2(n) + \lambda_{2,n}^{p} \big(1 - \rho(n)\big)^2   
 \end{equation}
Furthermore, (\ref{eq54}) is satisfied and putting into (\ref{eq57}), the equation becomes as the following,
\begin{equation}
       \label{eq62} 
  \frac{\mu_0}{1 +  P_0^{'}(n)   } = g ( \lambda_{1,n}^{p},  \lambda_{2,n}^{p} )   
 \end{equation}
 
It is observed that $P_1(n) = P_2(n) = 0$ for $n \in [1,N]$, the comparison for $WL_{1,2}(n)$ is removed and only $WL_3(n) = \big( \frac{\mu_0}{1 +  P_0^{'}(n)} \big)^{-1}$ is compared between neighbouring time intervals. The table for $WL_{4,5}(n)$ is modified as shown in Table \ref{taboptlambda3}.

\renewcommand{\arraystretch}{1.7}
\begin{table*}[t!]
\caption{The water levels  for the time interval $n$ for $(\mu_0  \geq \mbox{max}(\mu_1, \mu_2) )$}
\begin{center}
\footnotesize
\begin{tabular}{| M{3.9cm}|M{3.9cm}|M{0.6cm}|M{0.6cm}|M{0.6cm}|M{0.6cm}|M{0.6cm}|}
\hline
$\big(  WL_4 \big)^{-1}$   & $\big(  WL_5 \big)^{-1}$  & $\overline{P_1}$  & $\overline{P_2}$  & $P_0$ & $P_1$  & $P_2$       \\\hline 
 - &   \vspace{-0.5em}$\frac{\mu_0}{1 +  P_0^{'}}$     & $ 0$  & $+$  & $ +$  & $ 0$  & $0$   \\\hline 
            \vspace{-0.5em}  $\frac{\mu_0}{1 +  P_0^{'}}$     &     -  & $+$  & $0$  & $+$  & $ 0$     & $ 0$ \\\hline 
 \vspace{-0.5em}  $\frac{\mu_0}{1 +  P_0^{'}} \frac{1 + \chi_1}{\chi_1} $   &  \vspace{-0.5em}  $\frac{\mu_0}{1 + P_0^{'} } \frac{1 + \chi_2}{\chi_2} $  & $+$  & $+$  & $+$  & $0$  & $0$   \\\hline 
\end{tabular}
\end{center}
\label{taboptlambda3}
\end{table*}
\renewcommand{\arraystretch}{1}

\subsection{Iterative Water-Filling Algorithm}   
It is difficult  to implement the solution defined in Theorem \ref{lemma3} in a water-filling algorithm since at each time interval either of the nodes or both of the nodes can transfer energy to the neighbouring next time interval by looking only at the neighbouring time intervals with possibly leading to sub-optimum solution. Therefore, an iterative water-filling algorithm is defined similar to \cite{b12} in a way realizing power scheduling node by node iteratively while fixing the energy levels in the other node. Then, the following lemma is proved for $\mu_1 \geq \mu_2 \geq \mu_0$ and it can be easily extended to the other 2 cases of $\mu_1 \geq \mu_0 \geq \mu_2$ and $ \mu_0 \geq \mbox{max}(\mu_1, \mu_2)$.
\begin{thm} 
\label{lemma4} 
The optimization regarding the equalities in (\ref{eq50}-\ref{eq52}) can be achieved with an iterative backward water-filling algorithm given in Algorithm \ref{alg1} by only satisfying  $WL_4(n) = WL_4(n+1)$ for the power scheduling of the 1st node where the power levels in 2nd is fixed, and $WL_5(n) = WL_5(n+1)$   for the power scheduling of the 2nd node where the power levels in 1st is fixed for $n \in [1,N-1]$ whenever power transfer from the time interval $n$ to $n+1$ occurs.
\end{thm}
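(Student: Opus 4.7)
The plan is to prove Theorem \ref{lemma4} by recognizing Algorithm \ref{alg1} as block coordinate descent on the convex program established in Lemma \ref{lemma2}, where the two coordinate blocks correspond to the power allocations of node 1 and node 2. Since by Lemma \ref{lemma2} the feasible set $\mathbb{P}$ is convex and the objective $\sum_k \mu_k B_k$ is concave in the power variables, alternating optimization over the two blocks will converge to a global optimum provided each subproblem is solved exactly and the fixed point of the alternation satisfies the joint KKT system of Theorem \ref{lemma3}.

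First, I would fix node 2's allocation, i.e., freeze $P_2(n)$ together with the contribution $(1-\rho(n))^2 P_0(n)$ at each $n$, and examine the remaining optimization over node 1's variables. Treating the frozen node-2 terms as effective constants in the capacity expressions of (\ref{lemma1_2}), the causality constraints involving $\underline{E_2}$ and their multipliers $\lambda_{2,i}$ play no active role in redistributing node 1's energy; only the multipliers $\lambda_{1,i}$ in (\ref{c2_eq}) and (\ref{c4_eq}) govern energy transfer across time for node 1. Complementary slackness (\ref{c5_eq}) then yields $\lambda_{1,n}^{p}=\lambda_{1,n+1}^{p}$ whenever node 1 transfers energy from interval $n$ to $n+1$, and by the closed form (\ref{eqlambdaopt}) combined with Table \ref{taboptlambda} this is exactly the condition $WL_4(n)=WL_4(n+1)$. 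The monotonicity checks on $WL_1, WL_2, WL_3$ required by Theorem \ref{lemma3} are ordering consequences of the pooled KKT structure and are automatically satisfied at the fixed point; they are not independent degrees of freedom within a single-node subproblem. A symmetric argument with the roles of nodes swapped shows that when node 1 is frozen, the subproblem for node 2 is governed only by $WL_5(n)=WL_5(n+1)$.

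Convergence then follows from the standard block coordinate descent argument on a differentiable convex program with convex feasible set: the weighted objective is monotonically non-decreasing across iterations, it is bounded above by the global optimum value, so the iterates converge. At the limit both single-node KKT systems hold simultaneously, giving all conditions of (\ref{eq50})--(\ref{eq54}) together with (\ref{eqlambdaopt}), which by Theorem \ref{lemma3} is necessary and sufficient for the global optimum of the joint problem. The backward sweep in Algorithm \ref{alg1} is justified because energy causality (\ref{c4_eq})--(\ref{c4_eq1}) only allows deferral of consumption to later intervals, so starting from $n=N$ and propagating toward $n=1$ directly enforces $WL_4(n)\le WL_4(n+1)$ with equality under active transfer.

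The main obstacle will be justifying rigorously that the fixed point of the alternation reconstructs the joint Lagrange multipliers $\lambda_{1,n}^{p}, \lambda_{2,n}^{p}$ consistently, so that the quantities $P_0(n)$ and $\rho(n)$ recovered through (\ref{eq54}) and (\ref{eqlambdaopt}) from the two node-level solutions coincide. This consistency rests on the separable structure of the total-power constraints $\overline{P_k}(n) = P_k(n) + \text{(weight}_k) P_0(n)$, which allows $P_0(n)$ and $\rho(n)$ to be parametrized entirely by the pair $(\overline{P_1}(n), \overline{P_2}(n))$ once each node's budget is set. Verifying this decoupling, and handling the boundary regimes $\rho(n)\in\{0,1\}$ and $P_k(n)=0$ listed in Table \ref{taboptlambda} where the active set changes between iterations, is the delicate step; the remainder of the proof is then a standard block coordinate descent convergence argument.
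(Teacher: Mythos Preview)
Your approach is sound and genuinely different from the paper's. You frame Algorithm~\ref{alg1} as block coordinate descent over the two node-energy blocks $(\overline{P_1}(n))_n$ and $(\overline{P_2}(n))_n$, and then appeal to the convexity of Lemma~\ref{lemma2} together with the Cartesian-product structure of the causality constraints (\ref{c4_eq})--(\ref{c4_eq1}) to get monotone improvement and convergence to the joint KKT point. The paper, by contrast, does not invoke any general coordinate-descent convergence theory: it argues directly that, with node~2 frozen, only the pairs $\big(WL_1(n),WL_1(n{+}1)\big)$ and $\big(WL_4(n),WL_4(n{+}1)\big)$ remain relevant, and then performs a region-by-region enumeration over the optimality cells of Table~\ref{tab_app1} (recorded in Table~\ref{tab0_app0}) to verify that enforcing $WL_4(n)=WL_4(n{+}1)$ already implies the required ordering on $WL_1$ (and symmetrically for $WL_5$ and $WL_2$). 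In other words, the paper proves the \emph{implication} $WL_4\text{-equalization}\Rightarrow WL_1\text{-condition}$ by casework, whereas you derive it as a byproduct of the fact that a coordinate-wise optimum of a smooth convex program is a global optimum.

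What each approach buys: the paper's enumeration is concrete and self-contained, directly exhibiting for every pair of neighbouring region indices whether $WL_1(n)\le WL_1(n{+}1)$ or equality holds; no abstract convergence theorem is needed, but the verification is tied to the specific form of Table~\ref{tab_app1} and must be redone for each ordering of the $\mu_k$. Your framework is more conceptual and would generalize more cleanly (e.g., to $m$ users), but the step you flag as the ``main obstacle''---showing that the multipliers $\lambda_{1,n}^{p},\lambda_{2,n}^{p}$ from the two single-node subproblems glue into a consistent solution of (\ref{eq50})--(\ref{eq52}), especially across the boundary cases $\rho(n)\in\{0,1\}$ and $P_k(n)=0$---is precisely the content of the paper's Table~\ref{tab0_app0} enumeration, so in a full write-up you would either need that casework or a careful envelope/differentiability argument for the reduced objective.
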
 
The proof is given in  Appendix \ref{app0}.
 
The iterative water-filling algorithm given in Algorithm \ref{alg1} looks at the difference between water-levels and using an iterative weighted search algorithm, water-levels are optimized. The algorithm firstly optimizes the power levels in all time intervals based on the available amount of power in each node and using the iterative algorithm based on the regions and solutions in Table \ref{tab_app1} in Appendix \ref{app1}. Then, the power levels in the nodes are iteratively  optimized by equalizing $WL_4(n) = WL_4(n+1)$ and $WL_5(n) = WL_5(n+1)$  for the power scheduling of the 1st and 2nd node, respectively, while fixing the power levels in the other node fixed. For the neighbouring time intervals $n$ and $n+1$ if the equalities cannot be achieved by the power transfer then no action is taken for that specific time interval $n$.  
\begin{algorithm}[!t]
\scriptsize
\caption{Iterative Water-filling Algorithm for Gaussian MAC with Common Data}
\label{alg1}
\begin{algorithmic}
\STATE Initialize $WL_{i}(n)$ for $i \in [4,5]$  and $n \in [1,N]$ by finding the optimum $P_0(n)$, $P_1(n)$, $P_2(n)$ with the solutions in Table \ref{tab_app1}.
\FORALL{$i = 4$ to $5$}
\WHILE{$WL_{i}(n) \neq  WL_{i}(n+1)$  }
\FORALL{time intervals $n = N-1$ to $1$}
\IF{$i = 4$}
\STATE Fix 2nd node fixed and satisfy $WL_{4}(n) =  WL_{4}(n+1)$ for the 1st node.
\STATE Update $P_{k}(n)$, $P_{k}(n+1)$, $WL_{4}(n)$, $WL_{4}(n+1)$  for $k \in [0,1,2]$.
\ENDIF
\IF{$i = 5$}
\STATE Fix the 1st node fixed and satisfy $WL_{5}(n) = WL_{5}(n+1)$ for the 2nd node.
\STATE Update $P_{k}(n)$, $P_{k}(n+1)$, $WL_{5}(n)$, $WL_{5}(n+1)$  for $k \in [0,1,2]$.
\ENDIF
\ENDFOR
\ENDWHILE
\ENDFOR
\end{algorithmic}
\end{algorithm}
  
\section{Numerical Simulation Results}
\label{simulr1}

The proposed algorithm is simulated for the case that the 1st user and 2nd user harvest energies $E_1 = \left[ 3 \, 6 \, 10 \right]$ and $E_2 = \left[ 4 \, 11 \, 6 \right]$ mJ at the time instants $\left[ 0 \, 2 \, 6 \right]  $ and $\left[ 0 \, 5 \, 8 \right] $  seconds, respectively, $N = 5$ time intervals are considered until $T_f = 11$ seconds, bandwidth $W_{Tot} = 1$ Mhz,  noise spectral density $N_0 = 10^{-19}$ W/Hz and path loss $h = 10^{-11}$ are considered for generic analyses used in the literature \cite{b11}.  In the following figures, $E_{i}^c$ and $E_{i}^r$ denote consumed and harvested energy,respectively, for the nodes $i \in [1,2]$. 

\begin{figure}[ht!]
\begin{center}
\includegraphics[width=3in]{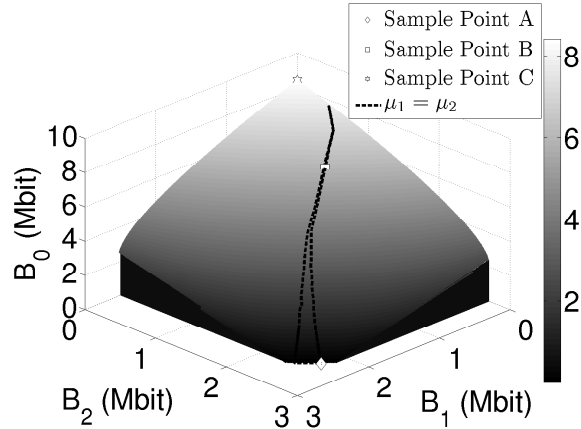} \\ 
\caption{The capacity boundary surface formed by using a large set of $(\mu_1, \mu_2, \mu_0)$. }
\label{figure_ns_1}
\end{center}
\end{figure}

The departure region boundary surface formed from the scattered points of the sample points is shown in Fig. \ref{figure_ns_1} which resembles the capacity boundary surface for the  single time interval shown in Fig. \ref{fig2}. It can be observed that there is a linear region on the capacity boundary surface between the points obtained with $\mu_1 = \mu_2$. This region corresponds to sampling boundary points between $T$ and $U$ as shown in Fig. \ref{fig2}. Furthermore, 3 sample points labelled with the labels $A$, $B$ and $C$ on the $\mu_1 = \mu_2$ curve are taken which corresponds to sampling the point $U$ at different $\mu_0$. These points are used to compare the effect of the amount of $B_0$ on the optimum power scheduling scenario. These points correspond to $B_0 = 0$ for the case when no common data is transmitted similarly to the study in \cite{b11}, $B_0 = 5.41$ Mbit where common data and the distinct data of each user are transmitted and the case for the maximum amount of common data of $B_0 = 8.41$ Mbit corresponding to also the maximum amount of total data rate.  

\begin{figure}[t!]
\begin{center}
\includegraphics[width=3in]{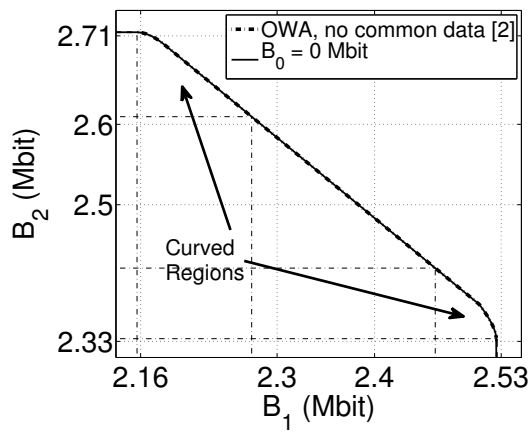} \\ 
\caption{The comparison between the optimum water-filling algrithm (OWA) for MAC scheduling framework in  \cite{b11}  and the proposed optimum water-filling algorithm for $B_0 = 0$. }
\label{figure_ns_2}
\end{center}
\end{figure}

Firstly, the comparison between the defined optimization framework and the optimum backward iterative water-filling algorithm denoted as OWA in \cite{b11} is shown in Fig. \ref{figure_ns_2}. It is observed that the proposed solution gives the same result with \cite{b11} when no common data is transmitted, e.g., $B_0 = 0$. Furthermore, the linear region between the points $T$ and $U$ can be observed.  The constant $B_0$ contours are given in Fig. \ref{figure_ns_3}. While $B_0$ increases, the amount of power used for $B_1$ and $B_2$ decreases and the boundary shows a curved behaviour with a decreasing amount of linear region as shown in Fig. \ref{figure_ns_3}.
\begin{figure}[t!]
\begin{center}
\includegraphics[width=3in]{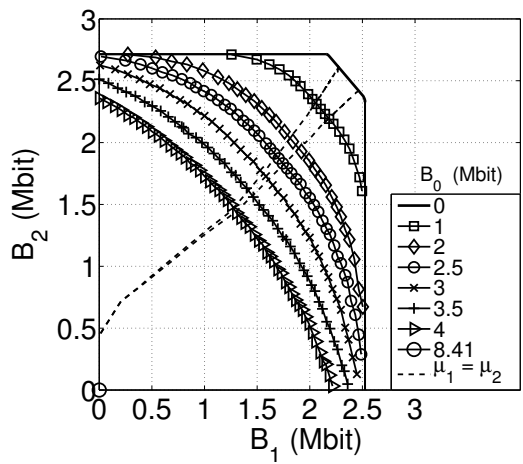} \\ 
\caption{Constant $B_0$ contours for the optimum scheduling algorithm.}
\label{figure_ns_3}
\end{center}
\end{figure}
\vspace{0.2in}
\begin{figure}[t!]
\begin{center}
\includegraphics[width=3in]{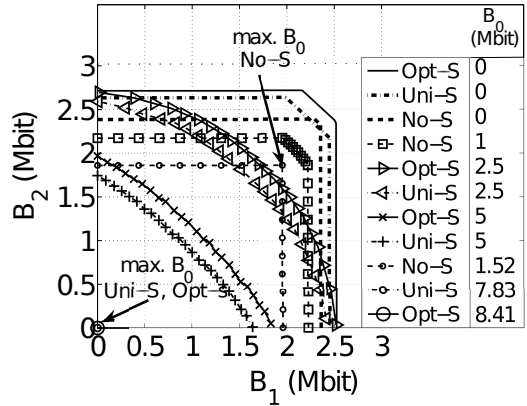} \\ 
\caption{Comparison between the no scheduling (No-S), uniform power distribution scheduling (Uni-S) and the optimum scheduling algorithm (Opt-S) for constant $B_0$ contours.}
\label{figure_ns_4}
\end{center}
\end{figure}

The optimum scheduling algorithm is denoted as Opt-S and compared with two basic algorithms denoted as Uni-s and No-S which represent no scheduling and the scheduling algorithm distributing the power uniformly until $T_f$, respectively. In No-S algorithm, at each time interval, the power levels are optimized to maximize the data throughput only in that time interval by using the solutions in Table \ref{tab_app1} in  \ref{app1}. In Uni-S algorithm, the harvested power is distributed uniformly for the next-coming time intervals until $T_f$. As shown in Fig. \ref{figure_ns_4}, No-S algorithm gives a maximum amount of common data rate of approximately $B_0 = 1.52$ Mbit with non-zero $B_1$ and $B_2$ since in some time intervals the harvested power is not utilized for $B_0$ due to the reason that only one of the nodes harvest energy and the optimal solution does not allow allocating power to common data. The Uni-S and Opt-S algorithms perform better compared with the No-S algorithm with larger $B_0$ and a larger capacity boundary volume. Opt-S algorithm is better compared with Uni-S algorithm as shown in Fig. \ref{figure_ns_4} such that a larger $B_1$ and $B_2$ curve is obtained for constant $B_0$ contours and the maximum amount of $B_0$ is bigger.

\begin{figure}[t!]
\begin{center}
\includegraphics[width=3in]{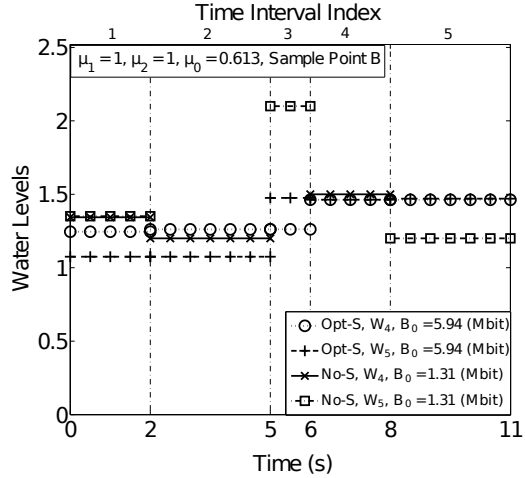} \\ 
\caption{The water-filling levels for No-S and Opt-S algorithms for the Sample Point B corresponding to $\mu_1=  \mu_2=1,   \mu_0 = 0.613$.}
\label{figure_ns_5}
\end{center}
\end{figure}

The water-filling and the power scheduling profiles of the nodes are shown and compared in Figs. \ref{figure_ns_5} - \ref{figure_ns_8} for the sample points $A$, $B$ and $C$ described in Fig. \ref{figure_ns_1}. As shown in Fig. \ref{figure_ns_5}, when there is no power scheduling, the water levels are not in equilibrium resulting a lower data rate for the common data $B_0$. On the other hand, optimization algorithm leads to the equilibrium of the water levels such that $W_4(n) = W_4(n+1)$ for $n = 2, 4$ and $W_5(n) = W_5(n+1)$ for $n = 1, 3, 4$ where previously harvested power is transferred to 3rd and 5th time intervals for the first node and to 2nd, 4th and 5th intervals for the second node.

\begin{figure}[t!]
\begin{center}
\includegraphics[width=3in]{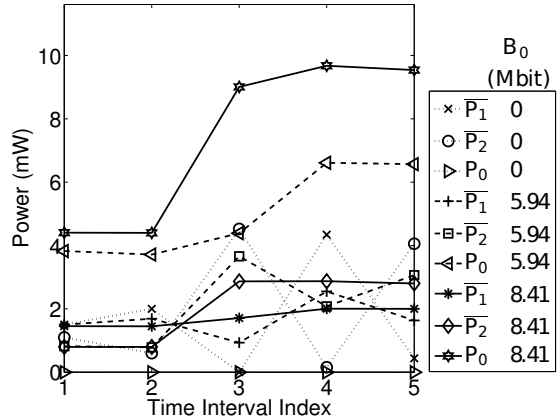} \\ 
\caption{ Consumed power levels in each time interval for the sample points $A$, $B$ and $C$ for varying $B_0$.}
\label{figure_ns_6}
\end{center}
\end{figure}

It can be observed that as $B_0$ increases, the power levels consumed by each node in different time intervals are more homogenized in order to maximize $B_0$ as shown in Fig. \ref{figure_ns_6} since for a given $\overline{P_1} + \overline{P_2}$ more amount of $P_0$ is obtained whenever $\overline{P_1}$ and $\overline{P_2}$ are closer to each other. The same situation is observed in  Fig. \ref{figure_ns_7} where more homogeneous and increasing power consumption is realized for each node as $B_0$ increases.

\begin{figure}[t!]
\begin{center}
\includegraphics[width=3in]{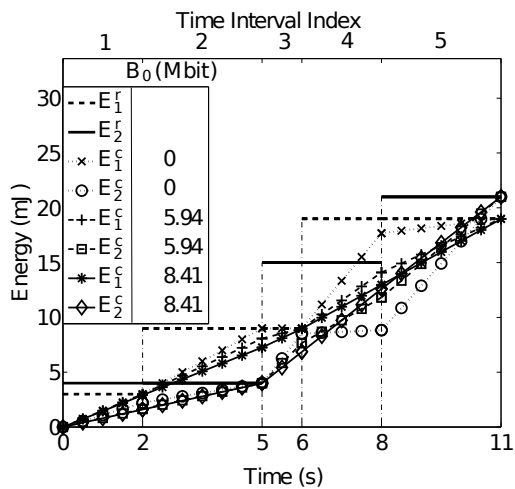} \\ 
\caption{Consumed energy levels for each node for varying $B_0$ at  the sample points A, B an C.  }
\label{figure_ns_7}
\end{center}
\end{figure}

Furthermore, as shown in Fig. \ref{figure_ns_8}, for the case when no common data is transmitted and $\mu_1 = \mu_2$, the optimal power scheduling  optimizes as if there is a single node with the total power $E_1^r + E_2^r$ \cite{b7, b11} where the total consumed power monotonically increases and uses all the harvested energy until the transmission rate changes. However, as $B_0$ increases, some of the harvested power is saved for future use in a way to maximize $B_0$ for a given $B_1$, $B_2$ data rate. For example, at $t = 5$ (sec)  not all the harvested energy is used although there is an increase in the data rate for the common data at the sample boundary points B an C as shown in Figs. \ref{figure_ns_6} and \ref{figure_ns_8}.

\begin{figure}[t!]
\begin{center}
\includegraphics[width=3in]{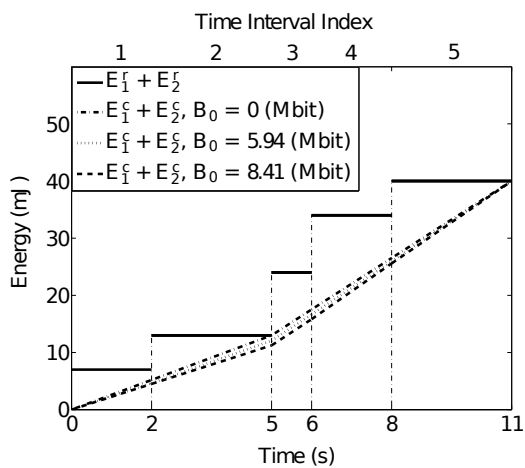} \\ 
\caption{Consumed total energy levels for varying $B_0$ at  the sample points A, B an C.  }
\label{figure_ns_8}
\end{center}
\end{figure}

As a result, the optimum solution defined in this article extends the previously defined optimum solution for Gaussian MAC without common data and performs better than no power scheduling and uniform power scheduling cases. Furthermore, although water-filling levels are more complicated due to the complexity of the optimum departure region for Gaussian MAC with common data, very efficient and simple water-filling algorithm is defined for the optimum solution.

\section{Future Work and Open Issues}
\label{fw}

There is a multitude of assumptions defined in the article which leads to a set of future work topics. First of all,   the optimum power allocation solution for $m$-user Gaussian MAC with common data in a single time interval, e.g. the study in \cite{b20}, can be utilized for an optimum packet scheduling policy with energy harvesting transmitters and various levels common data sharing mechanisms between $m$ nodes.

Furthermore, another open issue is to combine the time-varying fading effects and stochastic nature of energy harvesting and packet reception in a way to obtain the optimum online solution for Gaussian MAC with common data. Furthermore, the online extension can be improved more by adapting the scheduling algorithm to the level of the correlation among the transmitters, e.g., in a temporally or spatially correlated WSN.

Moreover, the utilization of common data beam-forming to transfer not only data but also energy, e.g., in a sensor networking architecture with limited amount of powers, is an alternative open issue to be analyzed. Time varying fading effects can be considered to adapt the transmission policy based on the current channel states. Therefore, an optimization framework can be defined deciding on the trade off between the  amount of common data to transfer more energy to the next node and the amount of data for the independent messages.  
 
\section{Conclusion}
\label{conc}

In this paper, optimum offline packet scheduling policy is developed for 2-user Gaussian MAC with common data and energy harvesting transmitters.  The optimum solution is derived by using KKT conditions. The proposed solution is implemented with an efficient iterative water-filling algorithm. The proposed optimum solution algorithm is numerically simulated to find the optimum departure region.  The optimum solution is compared with the solutions defined for Gaussian MAC without common data, no scheduling and uniform power scheduling solutions with the result of being the best among them and extending the previous solutions for Gaussian MAC without common data. Finally, a set of open issues and future work studies are defined which includes considering stochastic nature of the system leading to online solution, time-varying fading effects, both energy and data transfer possibility and the optimum offline solution for m-user Gaussian MAC with common data.

\begin{appendices}

\section{}
\label{app0}

\begin{proof}[Proof of Lemma \ref{lemma2}]
We can use a similar approach to \cite{b2} and extend the proof to include multiple time intervals. In the following, $\boldsymbol{B}$ denotes $(B_0, B_1, B_2)$, $\boldsymbol{P} = \left[P(1) \, P(2) \, \hdots \, P(N) \right]$ and $\boldsymbol{\rho} = \left[\rho(1) \, \rho(2) \, \hdots \, \rho(N) \right]$. We need to show that for any $0 < \theta < 1$, $\Big( \theta \boldsymbol{B^a}  + (1 - \theta)\boldsymbol{B^b}, $ $\theta \boldsymbol{\widetilde{E_{1, a}}} + (1 - \theta)\boldsymbol{\widetilde{E_{1, b}}},$ $  \theta \boldsymbol{\widetilde{E_{2, a}}} + (1 - \theta)\boldsymbol{\widetilde{E_{2, b}}}   \Big)$ is in  $\mathbb{P}$.  Both $\boldsymbol{B^a}$ and $\boldsymbol{B^b}$ satisfy that  $ \boldsymbol{B^{s}} 
\in B(\boldsymbol{P_1^{s}}, \boldsymbol{P_2^{s}}, \boldsymbol{P_0^{s}}, \boldsymbol{\rho^{s}})$ for some $ \boldsymbol{P_1^{s}}, \boldsymbol{P_2^{s}}, \boldsymbol{P_0^{s}}, \boldsymbol{\rho^{s}}$ such that the following conditions hold
\begin{eqnarray}
\label{l2_eq1}
\sum_{n = 1}^{i} \left( P_{1}^{s}(n) + \big(\rho^{s}(n)\big)^2 P_0^{s}(n) \right) L(n) \, \leq \, \widetilde{E_{1,s}}(i)    && \\
\label{l2_eq2}
\sum_{n = 1}^{i} \left(  P_2^{s}(n) + \big( 1 -  \rho^{s}(n)\big)^2 P_0^{s}(n) \right) L(n) \, \leq \, \widetilde{E_{2, s}}(i)  &&  
\end{eqnarray}
for $i \in [1, N]$ and $s = a$ or $s = b$. Let us define  $\boldsymbol{P_{k}} \equiv \theta \boldsymbol{P_{k}^{a}} + ( 1 - \theta ) \boldsymbol{P_{k}^{b}}$ for $k \in [0,2]$ and the following for $n  \in [1, N]$
\begin{eqnarray}
\label{l2_eq3}
\rho_1(n) &=& \sqrt{\frac{\theta  P_0^{a}(n) \big( \rho^a(n)\big)^2 + (1 - \theta)  P_0^{b}(n) \big( \rho^b(n) \big)^2 }{\theta  P_0^{a}(n) + \big(1 - \theta \big)  P_0^{b}(n) }}\\
\label{l2_eq4}
1-  \rho_2(n)  &=&  \sqrt{\frac{\theta  P_0^{a}(n) \big(1-  \rho^a(n)\big)^2 + (1 - \theta)  P_0^{b}(n) \big(1- \rho^b (n) \big)^2 }{\theta  P_0^{a}(n)  + (1 - \theta)  P_0^{b}(n) }}
\end{eqnarray}
By using the concavity of log function, the following can be proved,
\begin{eqnarray}
\theta B^a_1  + (1 - \theta)B^b_1 & \leq &  \theta  \sum_{n = 1}^{N} C \big(P_1^{a}(n)\big) + (1 - \theta) \sum_{n = 1}^{N} C \big(P_1^{b}(n) \big)  \nonumber \\
\label{l2_eq5}
&\leq &   \sum_{n = 1}^{N} C \big( \theta P_1^{a}(n) + (1 - \theta) P_1^{b}(n) \big) =  \sum_{n = 1}^{N} C \big(   P_1(n) \big) \, \, \, \,    \, \, \,
\end{eqnarray}
Similarly, the following can be proved,
\begin{eqnarray}
\label{l2_eq6}
\theta  B^a_2   + (1 - \theta)B^b_2   & \leq &  \sum_{n = 1}^{N} C \big(   P_2(n) \big) \\
\label{l2_eq7}
\sum_{i = 1}^{2} \theta B^a_i  + (1 - \theta)B^b_i   & \leq &  \sum_{n = 1}^{N} C \big( P_1(n ) +  P_2(n) \big) \\
\label{l2_eq8}
\sum_{i = 0}^{2} \theta B^a_i  + (1 - \theta)B^b_i   & \leq &  \sum_{n = 1}^{N} C \big(  P_0(n ) +P_1(n ) +  P_2(n) \big) 
\end{eqnarray}
Then, by using (\ref{l2_eq1}-\ref{l2_eq4}), the following can be proved,
\begin{eqnarray}
 \sum_{n = 1}^{N} \bigg(P_1(n) +  \rho^{2}_1(n) P_0(n) \bigg) L(n)   =  \, \, \, \theta \sum_{n = 1}^{N}  \bigg( P_{1}^{a}(n) + \big( \rho^{a}(n)\big)^2 P_0^{a}(n) \bigg) L(n) &&   \nonumber \\   
   +  \,  (1 - \theta) \sum_{n = 1}^{N} \bigg( P_{1}^{b}(n) + \big( \rho^{b}(n) \big)^2 P_0^{b}(n) \bigg) L(n)    \, &&   \nonumber \\
   \leq   \, \, \,  \theta \widetilde{E_{1, a}}(N)  + (1 - \theta)  \widetilde{E_{1, b}}(N)  \, \, \, \,    \, \, \, \,  \, \, \, \,  \, \, \, \, \, \, \, \, \, \, \, \, \, && 
   \label{l2_eq9}
\end{eqnarray}
Similarly, it can be proved that 
\begin{equation}
\label{l2_eq10}
\sum_{n = 1}^{N} \bigg(P_2(n) +  \big(1- \rho^{2}_2(n)\big)^2 P_0(n)\bigg) L(n)  \leq   \theta \widetilde{E_{2, a}}(N)  + (1 - \theta)  \widetilde{E_{2, b}}(N)   
\end{equation}
It can be easily proved that $\big(1 - \rho_1(n)\big)^2 \leq \big(1 - \rho_2(n)\big)^2$ by using (\ref{l2_eq3}-\ref{l2_eq4}). Then, inserting into (\ref{l2_eq9}-\ref{l2_eq10}) and extending the result for the time interval $i$, the following equations can be obtained,
\begin{eqnarray}
\label{l2_eq11}
\sum_{n = 1}^{i} \bigg( P_1(n) +  \rho^{2}_1(n) P_0(n) \bigg) L(n)   \leq     \theta \widetilde{E_{1, a}}(i)  + (1 - \theta)  \widetilde{E_{1, b}}(i)  &&   \\
\label{l2_eq12}
\sum_{n = 1}^{i} \bigg(P_2(n) +  \big(1- \rho^{2}_1(n) \big)^2 P_0(n) \bigg) L(n)  \leq    \theta \widetilde{E_{2, a}}(i)  + (1 - \theta)  \widetilde{E_{2, b}}(i)   &&
\end{eqnarray}
As a result, $\theta \boldsymbol{B^a}  + (1 - \theta)\boldsymbol{B^b} \in   B(\boldsymbol{P_1}, \boldsymbol{P_2}, \boldsymbol{P_0}, \boldsymbol{\rho_1})$. Therefore,  $\theta \boldsymbol{B^a}  + (1 - \theta)\boldsymbol{B^b} \in$  $B_d( \theta \boldsymbol{\widetilde{E_{1, a}}}  + (1 - \theta) \boldsymbol{ \widetilde{E_{1, b}}},$  $\theta \boldsymbol{\widetilde{E_{2, a}}}  + (1 - \theta) \boldsymbol{ \widetilde{E_{2, b}}}, N)$ and $\Big( \theta \boldsymbol{B^a}  + (1 - \theta)\boldsymbol{B^b}, $ $\theta \boldsymbol{\widetilde{E_{1, a}}} + (1 - \theta)\boldsymbol{\widetilde{E_{1, b}}},$ $  \theta \boldsymbol{\widetilde{E_{2, a}}} + (1 - \theta)\boldsymbol{\widetilde{E_{2, b}}}   \Big)$ is in  $\mathbb{P}$.
\end{proof}

\begin{proof}[Proof of Theorem \ref{lemma3}]

The optimum solution of $\left( P_0(n), P_1(n), P_2(n) \right)$ regarding the single time step $n$ with the available power levels $\overline{P_1}(n)$ and $\overline{P_2}(n)$ for that iteration is found by using the iterative descent algorithm defined in \cite{b2} with 8 different Lagrange multipliers regions and the corresponding solutions as shown in Table \ref{tab_app1} in  Appendix \ref{app1}. There is a difference between the optimum solution for the single time step between \cite{b2} and the current study. The Lagrange multipliers obtained in the solution for the single time step are the solutions for the power inequalities corresponding to $\overline{P_1}(n)$ and $\overline{P_2}(n)$ rather than (\ref{c1_eq}-\ref{c5_eq2}) in the proposed solution. Therefore, the multipliers $\lambda_{1,n}^{p}$ and $\lambda_{2,n}^{p}$ include the effects of the all time intervals.

Compared with \cite{b4, b11}, whenever the water levels between time intervals $n$ and $n+1$ are equalized, there can be  potentially at most $8 \times 8 = 64$ different combinations of optimality regions defined in Table \ref{tab_app1}. Furthermore, in the optimum scheduling solution, there can be 3 different cases corresponding to the transfer of energy between two time intervals, i.e., power transfer to the next time interval by the 1st node, 2nd node and both of the nodes, complicating the analysis more. A large set of optimality relations exist between the water levels $W_{1,2,3}(n)$ and $W_{1,2,3}(n+1)$ based on (\ref{eq50}-\ref{eq52}). For example, some of the multipliers will be zero whenever the corresponding power levels $P_{0,1,2}(n)$ are greater than zero. On the other hand, $\lambda_{1,n}^{p}$ and $\lambda_{2,n}^{p}$ are already decreasing functions based on their definition, i.e., $\lambda_{1,n}^{p} \geq \lambda_{1,n+1}^{p}$, and the same for the $\lambda_{2,n}$.  $\lambda_{1,n}^{p}$ is equal to $\lambda_{1,n+1}^{p}$ if the 1st node transfers the stored energy from the time interval $n$ to $n+1$, and similarly $\lambda_{2,n}^{p}$ is equal to $\lambda_{2,n+1}^{p}$ for the 2nd node and both the equalities hold if both the nodes transfer energy to the next time interval. Moreover, it is observed that $g ( \lambda_{1,n}^{p},  \lambda_{2,n}^{p}) \geq g (  \lambda_{1,n+1}^{p},   \lambda_{2,n+1}^{p})$.  As a result, by using these set of observations and the equalities in (\ref{eq50}-\ref{eq52}), it can be observed that $W_{i}(n) \geq W_{i}(n+1)$ for $i \in [1,N]$ is always satisfied possibly leading to $W_{i}(n) = W_{i}(n+1)$ for some specific combinations of optimality regions between neighbouring time intervals $n$ and $n+1$. Due to space limitations, the result is not given for all different optimality region combinations, however, they can be shown easily.

For the comparison regarding $W_{4,5}(n)$ and $W_{4,5}(n+1)$, the fact that $\lambda_{i,n}^{p} = \lambda_{i,n+1}^{p}$ if the $i$th node transfers stored energy from the time interval $n$ to $n+1$ is utilized for $i \in [1,2]$. In fact, $\lambda_{i,n}^{p}$ is represented in terms of $\big(P_0(n)$, $P_1(n)$, $P_2(n)$, $\overline{P_1}(n)$, $\overline{P_2}(n)\big)$ in the corresponding time interval $n$. Therefore, instead of equalizing Lagrange multipliers $\lambda_{i,n}^{p}$ and $\lambda_{i,n+1}^{p}$, new water levels $W_{4}(n)$ and $W_{5}(n)$ represented in terms of the power levels are defined and equalized. 
\end{proof}

\begin{proof}[Proof of Theorem \ref{lemma4}]
When the 2nd node has fixed energy in one of iterations, the only water levels to be compared are the pairs ($W_1(n)$, $W_1(n+1)$) and ($W_4(n)$, $W_4(n+1)$) for $n \in [1,N-1]$. In the same manner, when the 1st node has fixed energy in one of iterations, the only water levels to be compared are the pairs ($W_2(n)$, $W_2(n+1)$) and ($W_5(n)$,$W_5(n+1)$) for $n \in [1,N-1]$. These can be proved by removing the inequalities from (\ref{eq50}-\ref{eq52})  including the corresponding multipliers regarding the fixed node. It can be easily proved by comparing the multipliers for the two neighbouring time intervals  such that the inequalities and equalities in Table \ref{tab0_app0} hold for the optimum solution. Then, it can be easily proved by using the solutions in Table \ref{tab_app1} such that equalizing ($W_4(n)$, $W_4(n+1)$) for $n \in [1,N-1]$ satisfies the inequalities or the equalities regarding ($W_1(n)$, $W_1(n+1)$) and ($W_2(n)$, $W_2(n+1)$) in Table \ref{tab0_app0}. Therefore, there is no need to compare the 1st and 2nd water levels and it is enough to equalize ($W_4(n)$, $W_4(n+1)$) and ($W_5(n)$,  $W_5(n+1)$) between neighbouring time intervals.

\renewcommand{\arraystretch}{1.7}
\begin{table*}[t!]
\caption{The optimality conditions for ($W_{1,2,4,5}(n)$, $W_{1,2,4,5}(n+1)$) }
\begin{center}
\footnotesize
\begin{tabular}{| M{1.3cm}|M{1.3cm}|M{2.7cm}|M{2.7cm}|}
\hline
   \multicolumn{4}{|c|}{$ \mu_1   \geq \mu_2  \geq \mu_0  $, 1st node power scheduling}    \\\hline 
Region Indices $(n)$    & Region Indices $(n+1)$ & \multicolumn{2}{c|}{Cond.}      \\\hline  
 $2$, $4$, $6$    & $2$, $4$, $6$    & \multirow{4}{*}[-0.3em]{$ W_{4}(n) = W_{4}(n+1) $} &     \\    \cline{1-2} \cline{4-4}  
 $2$, $4$, $6$    &  $3$, $5$, $7$, $8$    &  &  $ W_{1}(n) \geq W_{1}(n+1)$   \\   \cline{1-2} \cline{4-4}   
$3$, $5$, $7$, $8$    & $2$, $4$, $6$     &   &   $ W_{1}(n) \leq W_{1}(n+1)$  \\      \cline{1-2} \cline{4-4}   
$3$, $5$, $7$, $8$    &  $3$, $5$, $7$, $8$    &  &  $ W_{1}(n) = W_{1}(n+1)$   \\\hline
 \hline
    \multicolumn{4}{|c|}{$ \mu_1   \geq \mu_2  \geq \mu_0  $, 2nd node power scheduling}    \\\hline 
Region Indices $(n)$    & Region Indices $(n+1)$ & \multicolumn{2}{c|}{Cond.}      \\\hline  
 $2$, $3$, $5$    & $2$, $3$, $5$    & \multirow{4}{*}[-0.3em]{$ W_{5}(n) = W_{5}(n+1) $} &     \\    \cline{1-2} \cline{4-4}  
 $2$, $3$, $5$    &  $4$, $6$, $7$, $8$    &  &  $ W_{2}(n) \geq W_{2}(n+1)$   \\   \cline{1-2} \cline{4-4}   
$4$, $6$, $7$, $8$    & $2$, $3$, $5$     &   &   $ W_{2}(n) \leq W_{2}(n+1)$  \\      \cline{1-2} \cline{4-4}   
$4$, $6$, $7$, $8$    &  $4$, $6$, $7$, $8$    &  &  $ W_{2}(n) = W_{2}(n+1)$   \\\hline
\end{tabular}
\end{center}
\label{tab0_app0}
\end{table*}
\renewcommand{\arraystretch}{1}
\end{proof}

\section{}
\label{app1} 
The following definitions are used next, $\lambda_1 \equiv \lambda_{1,n}^{p}$, $\lambda_2 \equiv \lambda_{2,n}^{p}$, $P_i \equiv P_i^{'}(n)$, $g \equiv g(\lambda_1, \lambda_2)$, $\alpha \equiv  (\mu_2 - \mu_0) \, / \, (\lambda_2 -  g)$, $\beta \equiv (\mu_1 - \mu_2) \, / \, (\lambda_1 - \lambda_2)$, $\gamma \equiv (\mu_0 - \mu_1)(g - \lambda_1)$. The regions are defined for varying $(\lambda_1 , \lambda_2)$  with the corresponding solutions. $R_i$ refers the region, and $S_i$ refers the solution in that region where $R_1$ refers to tzero power region. KKT or Lagrange multiplier regions and the solutions for the optimum power levels are given in Table \ref{tab_app1} resembling the structure in \cite{b2}.
\renewcommand{\arraystretch}{1.7}
\begin{table*}[t!]
\caption{The optimality regions and the solutions for the single  time interval}
\begin{center}
\footnotesize
\begin{tabular}{| M{0.6cm}|m{5.5cm}|m{6.5cm}|}
\hline
   \multicolumn{3}{|c|}{$ \mu_1   \geq \mu_2  \geq \mu_0  $}    \\\hline 
 Index    & Region & Solution    \\\hline 
 $1$ & $  \mu_i < \lambda_i: i \in [1,2], g > \mu_0 $               & $ P_{0,1,2}= 0 $    \\\hline 
 $2$ & $   \mu_i - \mu_0 + g < \lambda_i: i \in [1,2], \frac{\mu_0}{g}  > 1  $            & $  P_0 = \frac{\mu_0}{g} - 1, \, P_{1,2} = 0$    \\\hline 
  $3 $          &   $    \mu_1  > \lambda_1 , \frac{g}{\lambda_1}  >  \frac{\mu_0}{\mu_1} ,     \frac{\lambda_2}{\lambda_1 }   >   \frac{\mu_2}{\mu_1}   $     & $ P_{0,2} = 0, \, P_1 = \frac{\mu_1}{\lambda_1} - 1 $    \\\hline 
 $4 $          &   $   \mu_1 - \mu_2 < \lambda_1 - \lambda_2,  \frac{g}{\lambda_2 }  >   \frac{\mu_0}{\mu_2} ,     \lambda_2  <  \mu_2   $     & $  P_{0,1} = 0, \,  P_2 = \frac{\mu_2}{\lambda_2} - 1  $    \\\hline 
  $ 5$          &   $    \frac{\mu_0}{g}  > 1 , \gamma   >  \alpha ,  P_{0,1} > 0 $     & $ P_0 = \frac{\mu_0 }{g} - P_1  - 1, P_1 = \gamma - 1   $    \\\hline 
   $6 $          &   $    \frac{\mu_0}{g}  > 1 , \frac{\lambda_1 - \lambda_2}{\mu_1 - \mu_2}  > 1, \alpha > 1, P_0 > 0   $     & $ P_0 = \frac{ \mu_0}{ g} - P_2 - 1,  P_{1} = 0, P_2 = \alpha - 1  $    \\\hline 
  $7 $          &   $   \frac{g}{\lambda_2}  >   \frac{\mu_0}{\mu_2}, \frac{\mu_2}{\lambda_2} > 1, \frac{\lambda_2}{\lambda_1} <  \frac{\mu_2}{\mu_1},  P_1 > 0  $     & $P_{0} = 0, P_1 = \frac{\mu_1 - \mu_2}{\lambda_1 -   \lambda_2 } - 1 ;  P_2 = \frac{\mu_2}{\lambda_2} - P_1  - 1 $    \\\hline 
   $ 8$          &   $    \frac{\mu_0}{g} >  \beta  > 1,   \alpha > 1, P_{0, 2} > 0  $     & $P_0 = \frac{\mu_0 }{g} - \frac{\mu_2 - \mu_0}{\lambda_2 - g}, P_1 = \frac{\mu_1 - \mu_2}{\lambda_1 -   \lambda_2 } - 1, \newline  P_2 = \frac{\mu_2 - \mu_0}{\lambda_2 - g}  - P_1  - 1 $      \\  \hline \hline
  \multicolumn{3}{|c|}{$ \mu_1   \geq \mu_0  \geq \mu_2  $}    \\\hline 
$1$ & $   \mu_1 < \lambda_1,  g > \mu_0 $               & $ P_{0,1,2}= 0 $    \\\hline 
 $2$ & $   \mu_1 > \lambda_1 , \frac{g}{\lambda_1} >  \frac{\mu_0}{\mu_1} $            & $ P_1 = \frac{\mu_1}{\lambda_1} - 1, \, P_{0,2} = 0$    \\\hline 
  $3 $          &   $  \frac{1 }{\gamma} > 1 ,     \frac{\mu_0}{g }> 1  $     & $  P_0 = \frac{\mu_0}{g} - 1 , \, P_{1,2} = 0 $    \\\hline 
 $4 $          &   $  \frac{\mu_0}{g} > 1, P_{0, 1} > 0   $     & $  P_1 = \frac{\mu_1 - \mu_0}{\lambda_1 - g}  - 1 , \, P_0 = \frac{\mu_0}{g} - P_1 - 1 , \newline P_{2} = 0    $    \\\hline 
 \hline 
   \multicolumn{3}{|c|}{$ \mu_0   \geq \mbox{max}(\mu_1, \mu_2 )$}    \\\hline 
   $1$          &   $  \frac{\mu_0}{g} <= 1 $     & $  P_{0,1,2}= 0   $    \\\hline 
   $2$          &   $  \frac{\mu_0}{g} > 1 $     & $  P_0 = \frac{\mu_0}{ g}  - 1,  P_{1,2}= 0    $    \\\hline 
\end{tabular}
\end{center}
\label{tab_app1}
\end{table*}
\renewcommand{\arraystretch}{1}

\end{appendices}

%
%




%
%
\clearpage

\end{document}